\documentclass[11pt]{article}
\usepackage[margin=1in]{geometry}
\usepackage{booktabs}
\usepackage{array}
\usepackage{multirow}
\usepackage{multicol}
\usepackage{comment}
\usepackage{bbm}
\usepackage{float}
\usepackage{subcaption}

\usepackage{graphicx} 

\usepackage{multirow,amssymb,amsmath,tabularx, amsthm, amsfonts,color}
\usepackage{import}
\usepackage{placeins}
\usepackage[ruled,linesnumbered]{algorithm2e}
\usepackage{chngcntr, apptools}
\usepackage{mathtools}

\let\oldnl\nl
\newcommand{\nonl}{\renewcommand{\nl}{\let\nl\oldnl}}

\newlength\mylen

\newcommand{\I}[1]{\mathbb{I}\left[#1\right]}

\newtheorem{proposition}{Proposition}

\newcommand\numberthis{\addtocounter{equation}{1}\tag{\theequation}}

\usepackage{natbib}
\usepackage[hidelinks]{hyperref}
\usepackage{url}

\title{Regularized covariance estimation from partially observed interferometric data}
\date{}
\author{Teresa Bortolotti\textsuperscript{1,*}, Roberta Troilo\textsuperscript{1,*}, Francesco Casu\textsuperscript{2}, Simone Vantini\textsuperscript{1},\\ and Alessandra Menafoglio\textsuperscript{1}
}

\begin{document}

\maketitle
\let\thefootnote\relax
\footnotetext{\hspace{-0.57cm}\textsuperscript{1}MOX, Department of Mathematics, Politecnico di Milano, Piazza Leonardo da Vinci 32, 20133, Milan, Italy.\\
\textsuperscript{2}Istituto per il Rilevamento Elettromagnetico dell'Ambiente, National Research Council, Via A. Corti 12, 20133, Milan, Italy.\\
\textsuperscript{*}The authors contributed equally to this work.}


\begin{abstract}
The Small BAseline Subset technique provides remote measurements of ground displacement with high spatial resolution, making it a key tool for monitoring geophysical processes in hazard-prone areas. An effective analysis of this type of data requires reliable estimation of their second-order structure, which is difficult to achieve because the measurements are systematically missing over relatively large portions of the investigated areas.
We tackle the problem from a functional data analysis perspective and treat the observations as partially observed functional data with two-dimensional domain.
To properly characterize the data, we introduce the fragmented regime of partial observation, where parts of the curves are systematically missing across replicates. 
For this regime, we propose a novel method for covariance estimation, formulating the task as a matrix completion problem with Laplacian regularization. The estimator is nonparametric and free from stationarity or isotropy assumptions.
Extensive simulations show that our method achieves consistently low estimation error across a range of covariance structures. Application to ground displacement data relative to the Phlegraean Fields demonstrates its ability to recover meaningful spatial dependence patterns, highlighting its potential for environmental risk assessment and monitoring.\\

\noindent \textbf{Keywords:} covariance estimation; functional data analysis; matrix laplacian regularization; non-stationarity; partially observed data; remote sensing.
\end{abstract} 

\bigskip

\section{Introduction}
\label{sec:introduction}
\paragraph{Background and motivation}
Spaceborne radar interferometry has emerged as a powerful tool for systematic monitoring of geophysical phenomena over large areas potentially exposed to natural hazards. The Small BAseline Subset technique (SBAS, \citealp{berardino2002new}), in particular, stands out for its ability to derive time series of ground displacement with millimetric accuracy and at fine spatial resolution \citep{casu2006quantitative}. The SBAS algorithm builds a sequence of high-resolution images, each pixel of which contains the time series of ground displacement that occurred in the corresponding area of the ground with respect to a reference time. In the context of natural hazard assessment, SBAS-derived ground displacement data are routinely used to monitor subsidence, landslides, volcanic activity, and other geophysical processes. The increasing availability of such data has led to numerous applications in geosciences, civil protection, and environmental risk management (e.g. \citealp{poggi2026sentinel}; \citealp{di2021joint}; \citealp{lanari2010surface}).

While much of the focus has been put on studying mean and point deformation signals (e.g., \citealp{festa2022nation,monterroso2020global,bernardi2021use}), effective environmental prediction and monitoring require estimating the second-order structure of the data. At a basic level, the covariance informs uncertainty quantification in displacement maps and supports standard statistical tasks such as kriging or dimensionality reduction. More broadly, access to the full covariance enables scenario simulation, ground displacement reconstruction in unobserved areas, and change of support (e.g. \citealp{cressie2015statistics,banerjee2003hierarchical}).

However, retrieving the covariance structure of SBAS data poses several challenges. First, the data are inherently nonstationary due to the heterogeneity of the underlying geophysical phenomena. Second, information in large portions of the observed areas is missing: the scattering, absorption or reflection away from the sensor of radar signals -- happening when specific spatial entities on the ground are radiated, such as water, vegetation, or rocks -- cause the presence of missing values in some pixels of the SBAS-processed images. Additionally, the ground displacement measurements in those pixels are persistently
missing across all images. The persistent absence of observations for pairs of domain locations make standard covariance estimation infeasible for those locations.

By adopting a functional data analysis perspective (FDA; e.g., \citealp{ramsay2005functional}; \citealp{horvath2012inference}), we see the SBAS-processed images as functional data defined on a two-dimensional domain, acquired under a partial observation regime (e.g., \citealp{yao2005functional}; \citealp{kraus2015components}; \citealt{stefanucci2018PCA}; \citealp{descary2019recovering}; \citealt{kneip2020optimal}; \citealp{delaigle2021estimating} ). On the one hand, adopting the FDA perspective is convenient in that it offers a natural framework for modeling densely sampled processes characterized by intrinsic smoothness and continuity. On the other hand, this framework allows us to build on recent literature focusing on covariance estimation from partially observed functional data, to develop a new approach for covariance estimation specifically tailored to the regime of partial observation where entire portions of the domain are persistently missing over replicates.

\paragraph{Relation to prior work and main contribution}
The estimation of the covariance operator is a key task in FDA, since it underlies dimension-reduction approaches and provides an important tool for regularizing inference problems (\citealp{panaretos2013cramer}; \citealp{descary2019recovering}). In the case of fully observed functional data, consistent covariance estimators are readily available because the empirical covariance kernel can be observed across the entire domain. For partially observed data, however, the empirical estimator is available only at pairs of observed points, making direct estimation of the covariance infeasible for pairs involving the missing portions of the domain.

Existing work provides solutions to the covariance estimation problem under specific regimes of partial observation.
In the \emph{blanket regime} \citep{kraus2015components}, a sufficient number of functional samples covers the entire domain, so naive extensions of classical estimators are consistent and can be adopted directly.  
In the \emph{banded regime}~\citep{delaigle2021estimating,descary2019recovering}, each functional sample is observed on a subinterval of fixed length $\delta$. In this setting, the empirical covariance is consistently estimable only within a band of width $\delta$ around the diagonal, forming a so-called \emph{serrated domain}, and extrapolation to the missing portions is possible under a low-rank assumption on the true covariance \citep{descary2019functional,descary2019recovering}.  
In contrast, these methods fail when all samples are restricted to the same, possibly disjoint subdomain $O$. We refer to this setting as the \emph{fragmented regime}. Here, large regions of the domain are persistently unobserved across all replicates, so no consistent covariance estimates are available for pairs of points within or across those regions. Figure~\ref{fig:po-regimes} provides an illustration of the three regimes and of the key structural differences of the empirical covariance kernels in the three cases.

We propose a novel nonparametric approach to covariance estimation in the fragmented regime. Our method casts covariance estimation as a matrix completion problem, regularized by a Laplacian penalty that encourages smooth extension of the covariance kernel into unobserved areas while preserving flexibility in the observed regions. The proposed approach is fully nonparametric and does not rely on stationarity or isotropy assumptions.

\begin{figure}[ht]
    \centering
    \includegraphics[width=0.9\linewidth]{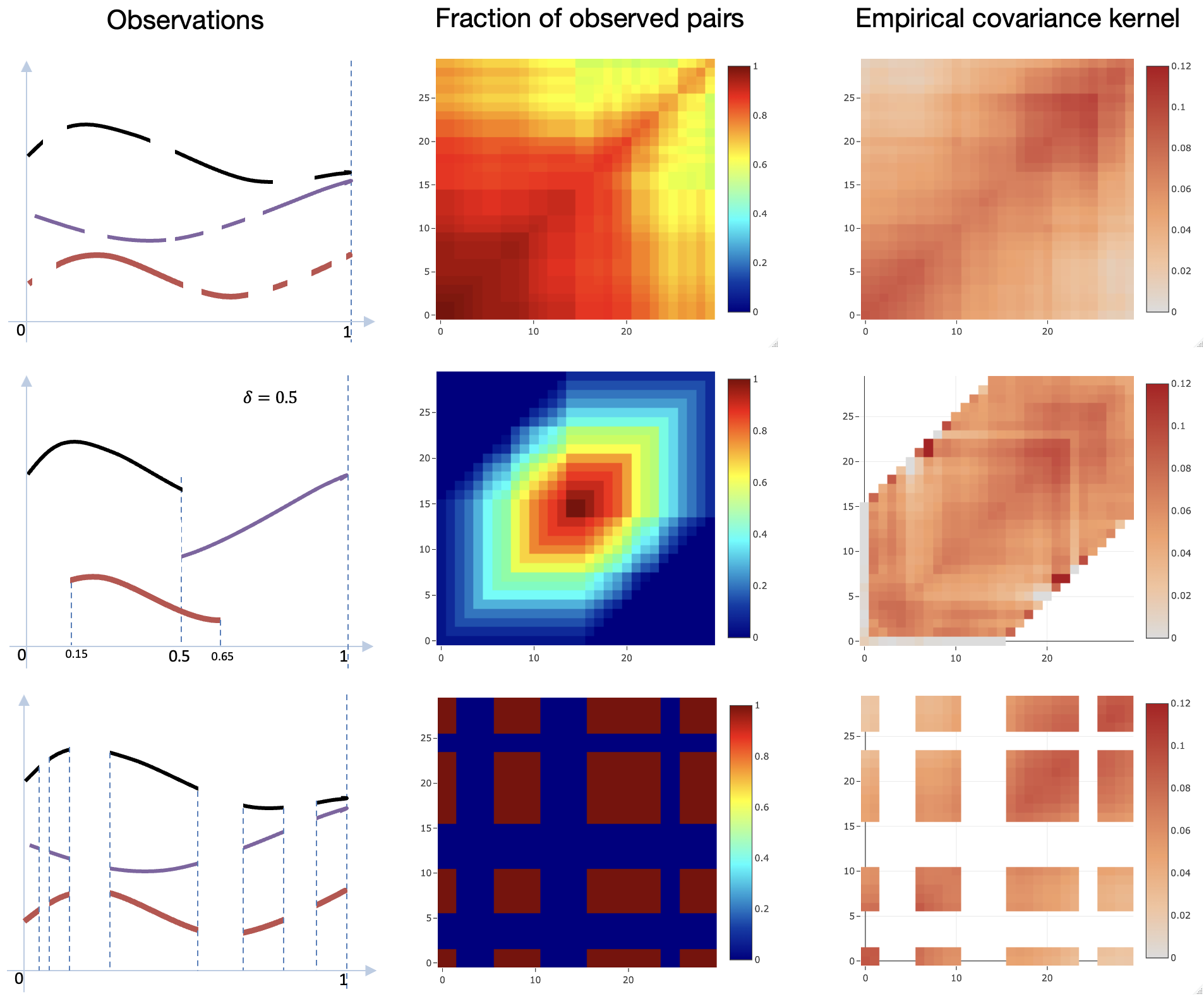}
    \caption{Illustration of the regimes of partial observation: {\em blanket regime, banded regime, fragmented regime}. The first column shows, for each regime, three functional data under that regime of partial observation. The second column shows, for each point $(s,t)$, the fraction of available observations out of a sample of $n=1000$ partially observed functional data. The last column displays the empirical kernel evaluated with the available observations at each point $(s,t)$.}
    \label{fig:po-regimes}
\end{figure}

We showcase the potential of our approach in terms of environmental monitoring from remote sensing data, by applying our method to estimate the second order structure of ground motion recorded by SBAS-processed SAR images relative to the area of the Phlegraean Fields, in central Italy, an area notably exposed to seismic and bradyseismic activity. 
Figure~\ref{fig:case-study-reconstructed-covar-pixels} anticipates our contribution in estimating the covariance from the available ground displacement data. The figure displays the estimated covariance between three selected locations (black dots) and all other points in the domain. Each column corresponds to one location: a central point where ground deformation is observed (first column), a point with missing observations (second column), and a location adjacent to the latter (third column). 
The top row reports the empirical covariance, which provides no information for pairs involving unobserved points (gray areas). 
The bottom row shows the covariance estimated with our method, which fills in the missing regions smoothly and extends the covariance structure from neighboring observed areas into the unobserved ones.

\begin{figure}[ht]
    \centering
    \includegraphics[width=0.8\linewidth]{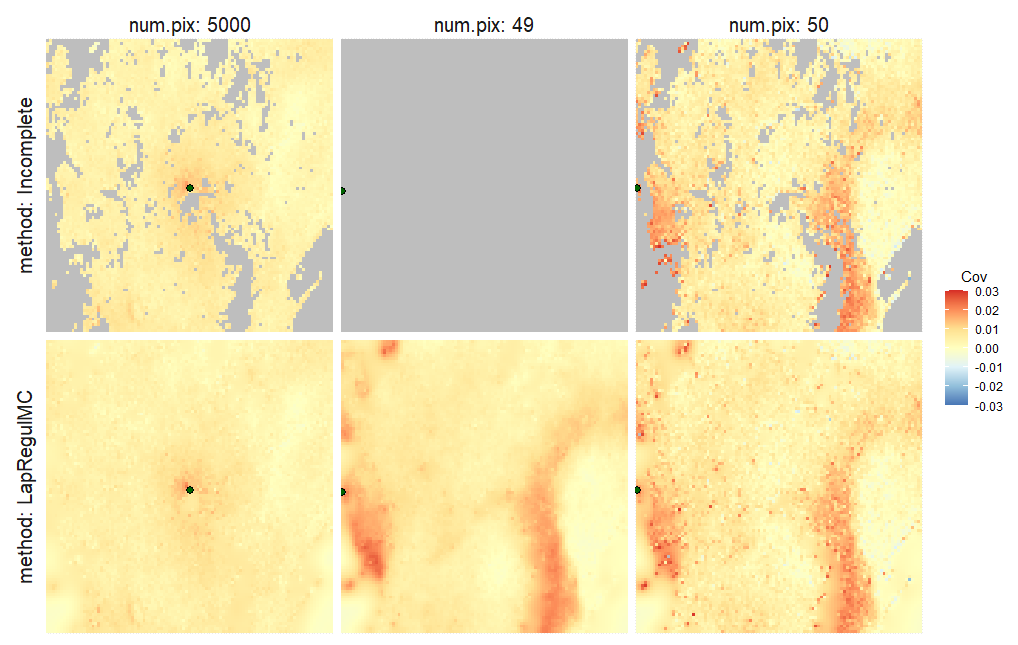}
    \caption{Covariance between three selected locations (black dots) and all other points in the domain. Columns correspond to: (i) a central location with observed ground deformation, (ii) a location with missing observations, and (iii) a location adjacent to the latter. The top row shows the empirical covariance, which is undefined for pairs involving unobserved points (gray areas). The bottom row shows the covariances estimated by our method.}
    \label{fig:case-study-reconstructed-covar-pixels}
\end{figure}

\paragraph{Outline of the paper}

The rest of the paper is organized as follows. In Section~\ref{sec:notation-and-problem-definition}, we introduce the notation and formally introduce the covariance estimation problem. In Section~\ref{sec:method}, we present our methodology for covariance operator estimation in the fragmented observation regime. Section~\ref{sec:simul-study} illustrates the performance of the proposed approach through a series of simulations, and identifies useful criteria for the selection of the  hyperparameters. After detailing how our regularized matrix completion problem can be extended to functional data with two-dimensional domain, we show in Section~\ref{sec:case-study} the estimated covariance structure of the SBAS-processed SAR images related to ground deformation monitoring. Then, we present potential byproducts of fully estimating the second-order structure of this data, including scenario generation and ground deformation reconstruction in the missing pixels. Finally, Section~\ref{sec:conclusions} concludes with a summary and a discussion of future directions.

The Appendices in the Supplementary Material provide additional details. Appendix~\ref{appendix:state-of-the-art} shows the performance of state-of-the-art methodologies for covariance estimation in the fragmented regime. A mathematical proof showing the well-posedness of our covariance estimation procedure in this regime is in Appendix~\ref{appendix:proof}. Further numerical results on synthetic experiments and on the interferometric data are in Appendices~\ref{appendix:simulation-study} and~\ref{appendix:case-study}.

\section{Notation and problem definition}
\label{sec:notation-and-problem-definition}

Let $\{X_i\}_{i=1}^n$ be independent and identically distributed random functions with values in $L^2([0,1])$, i.e.\ the Hilbert space of square-integrable functions defined on $[0,1]$. Under the assumption that $\mathbb{E}\left[ \lVert X_i \rVert_{L^2}^2 \right] < \infty$, we define the common mean function $\mu \colon [0,1] \rightarrow \mathbb{R}$ and covariance operator $\mathcal{R} \colon L^2([0,1]) \rightarrow L^2([0,1])$ of the random functional variables. As $\mathcal{R}$ is a Hilbert--Schmidt operator, it admits a kernel representation $r \colon [0,1]\times[0,1] \to \mathbb{R}$ such that
\begin{align*}
    \mathcal{R} f \; (t) = \int_0^1 r(s,t) f(s)\, ds, \qquad f \in L^2([0,1]).
\end{align*}
Recall that, for almost any $(s,t) \in [0,1]^2$, the covariance kernel reads $ r(s,t) = \operatorname{cov}(X_1(s), X_1(t)) $.

When $\{X_i\}_{i=1}^n$ are completely observed over $[0,1]$, the estimators of the mean and the covariance kernels are
\begin{equation}
     \hat{\mu}(s) = \frac{1}{n} \sum_{i=1}^n X_i(s), s \in [0,1], \qquad \hat{r}(s,t) = \frac{1}{n} \sum_{i=1}^n \left(X_i(s) - \hat{\mu}(s)\right) \left(X_i(t) - \hat{\mu}(t)\right), s,t \in [0,1].
     \label{eq:estimators_fully_observed}
\end{equation}
Under the assumptions of i.i.d.\ sampling, finite second moment, and complete observation of each trajectory, it is well known that $\hat{\mu}$ converges to $\mu$ in $L^2([0,1])$ and $\hat{r}$ converges to $r$ in $L^2([0,1]^2)$ as $n \to \infty$ (e.g.\ \citealp{bosq2000linear}). Thus, in the fully observed setting, the natural estimators in~\eqref{eq:estimators_fully_observed} are consistent.


We assume that $\{X_i\}_{i=1}^n$ are available only on the observable domain $ O \subset [0,1]$, which may in principle consist of multiple disjoint subintervals. Let $M := [0,1] \backslash O$ denote the unobserved part of the domain. As a consequence, $X_i(t)$ can be written as $X_i(t)=X_{iO}(t)\mathbf{1}_{O}(t)+X_{iM}(t)\mathbf{1}_{M}(t)$. As $X_{iM}(t)$ is unavailable, it is clear that the estimators in~\eqref{eq:estimators_fully_observed} are undefined whenever $t \in M$ or $s \in M$.

A naive extension of the fully observed estimators for the fragmented regime is obtained by simply setting to zero the contributions outside $O$. In analogy with patched estimators for partially observed functional data~\citep{kraus2015components,descary2019functional}, define
\begin{align}
    \hat{\mu}(s) &= \frac{\I{s \in O}}{n} \sum_{i=1}^n X_i(s), \label{eq:mean_estimator_fragmented_regime}\\
    \hat{r}(s,t) &= \frac{\I{s \in O} \cdot \I{t \in O}}{n} \sum_{i=1}^n (X_i(s) - \hat{\mu}(s))(X_i(t) - \hat{\mu}(t)).
    \label{eq:kernel_estimator_fragmented_regime}
\end{align}
These estimators coincide with those in~\eqref{eq:estimators_fully_observed} when $s,t \in O$, and are equal to zero otherwise. They inherit consistency from~\eqref{eq:estimators_fully_observed} on pairs of locations within the observed domain. Whenever $\I{s \in O} \cdot \I{t \in O}=0$, on the other hand, the estimator~\eqref{eq:kernel_estimator_fragmented_regime} sets all covariances involving unobserved locations to zero, and most critically assigns zero variance to unobserved points. While setting missing covariances to zero generally results in a distorted representation of the second-order structure of the data, assigning zero variance to unobserved locations leads to a severe underestimation of the variability of the data generating process.

State-of-the-art methods for covariance estimation from partially observed functional data \citep{descary2019recovering,delaigle2021estimating} proceed by extrapolating~\eqref{eq:kernel_estimator_fragmented_regime} into unobserved regions. Both approaches guarantee consistency under the assumption that the empirical covariance kernel is available on a {\em serrated domain}, i.e., within a band of width $\delta$ around the main diagonal. Figure~\ref{fig:reconstruction-state-of-the-art} illustrates the performance of the two methods in the fragmented regime, where this assumption is clearly violated.
The results show that neither method is able to recover a meaningful covariance structure in this setting.

Figures~\ref{fig:exp2_alpha0.1_m1_K30_rank20_reduced}–\ref{fig:exp4_alpha0.1_m1_K30_rank20_reduced} in Appendix~\ref{appendix:state-of-the-art} confirm that this failure is not specific to the missing-data pattern used in the first experiment. In fact, the results clearly show that the tendency to set covariances to zero becomes increasingly severe for both methods as the width of the missing intervals increases.

\begin{figure}[ht]
    \centering
    \includegraphics[width=1\linewidth]{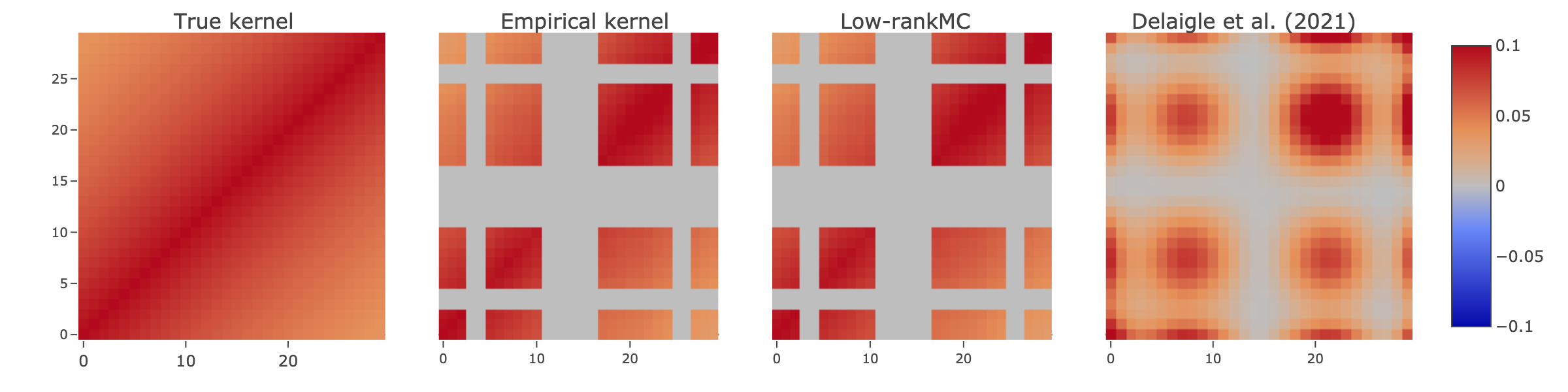}
    \caption{Example of covariance kernel estimation with state-of-the-art methods for partially observed functional data in the fragmented regime. From the left: true covariance kernel; empirical covariance kernel~\eqref{eq:kernel_estimator_fragmented_regime}; covariance kernel estimation with the method of~\citet{descary2019recovering} ({\em Low-rankMC}); covariance kernel estimation with the method of~\citet{delaigle2021estimating}. The empirical covariance kernel is obtained from a sample of $n=1000$ functional observations of a Gaussian process with stationary covariance defined by the Matérn kernel. Each functional sample is observed on a regular grid of $K=30$ points. The fragmented regime is simulated by selecting $J_M = \{4,5,12:17,26,27\}$ as the set of missing indexes along the grid (see Section~\ref{sec:data-sim}).}
    \label{fig:reconstruction-state-of-the-art}
\end{figure}

These preliminary analyses highlight the need for improved covariance estimation in the fragmented regime. As for the mean function, our covariance estimator will require knowledge of $\mu$ only on the observed domain $O$. Reconstructing $\mu$ over the entire domain, although possible given a covariance estimate through trajectory reconstruction methods \citep{kraus2015components,kneip2020optimal,kraus2020ridge} or geostatistical techniques such as universal kriging \citep{cressie2015statistics,chiles2012geostatistics}, lies beyond the scope of this work.

\paragraph{Discretization of the estimation problem.}
Since this setting naturally fits the real data analyzed in this work, as well as most practical applications of FDA, we focus on the case where functional data are observed on a finite grid of points, in line with the work of~\citet{descary2019recovering}. Specifically, we assume that each trajectory is observed at $K \geq 2$ locations $(t_1, \dots, t_K) \in T \subset O$, corresponding to consecutive points along the domain. Given an equally spaced partition $\{I_j\}_{j=1}^K$ of the interval $[0,1]$, each function $X_i$ is then represented by the values $X_i(t_1), \dots, X_i(t_K)$, with $t_j \in I_j$ for all $j=1,\dots,K$. This leads to the following piecewise constant, $K$-resolution approximation of $X_i$:
\begin{equation}
    X_i^K(t) = \sum_{j=1}^K X_i(t_j) \, \I{t \in I_j}.
    \label{eq:emp-data}
\end{equation}
The $K$-resolution version of the true covariance kernel is
\begin{equation}
    r^K(s,t)=\operatorname{Cov}(X_i^K(s),X_i^K(t))
    =\sum_{j,l=1}^K r(t_j,t_l) \, \I{(s,t) \in I_j \times I_l},
    \label{eq:cov_disc}
\end{equation}
and its empirical counterpart is obtained by replacing $r(t_j,t_l)$ in~\eqref{eq:cov_disc} with the estimator $\hat{r}(t_j,t_l)$ from~\eqref{eq:kernel_estimator_fragmented_regime}, i.e.,
\begin{equation}
    \hat{r}^K(s,t) = \sum_{j,l=1}^K \hat{r}(t_j,t_l) \, I{(s,t) \in I_j \times I_l}.
    \label{eq:cov_disc_emp}
\end{equation}

Throughout this work, we represent~\eqref{eq:cov_disc_emp} via the empirical covariance matrix $\hat{R}^K = \{\hat{r}(t_j,t_l)\}_{j,l=1}^K$.
In the discretized setting, the fragmented regime implies that each trajectory is observed only on a subset $\mathcal{J}_O$ of grid indices, while it is missing on $\mathcal{J}_M = [K] \backslash \mathcal{J}_O$.
Let $\Omega_O := \mathcal{J}_O \times \mathcal{J}_O$ denote the set of observed index pairs, and let $\Omega_M  = \left([K]\times[K] \right) \backslash \Omega_O$. In accordance to~\eqref{eq:cov_disc_emp}, all entries of $\hat{R}^K$ corresponding to $\Omega_M$ are equal to zero.
In the following, we cast the problem of estimating the covariance operator as a regularized matrix completion problem. The estimated entries of the matrix serve as coefficients in~\eqref{eq:cov_disc_emp}, which in turn defines an estimator of the continuous covariance kernel. Note that the piecewise constant approximation in~\eqref{eq:cov_disc_emp} is only one possible discretization strategy. If one sought a smoother solution, one could instead expand the kernel in a B-spline basis or apply a kernel smoothing procedure to~\eqref{eq:cov_disc_emp} as a post-processing step (see, e.g., \citealp{ramsay2005functional}).

\section{Method}
\label{sec:method}
We view the estimation of the covariance operator as a matrix completion problem for the empirical covariance matrix. This leads to a finite-dimensional optimization problem, where the unknown is the covariance matrix itself, identified by minimizing an objective functional composed of a misfit term and a Laplacian penalty. The latter encourages smooth transitions between neighboring entries of the covariance operator. The assumption underlying the methodology is that, in the unobserved regions, the second order structure of the data-generating process can be retrieved as a smooth extension of the structure observed at the boundaries of the missing regions.



\subsection{Covariance estimation via regularized matrix completion}
\label{sec:cov-estimation}
Let $\hat{R}^K$ denote the $K$-resolution empirical covariance matrix introduced in the previous section. 
Let $\mathbb{S}_{+}^K \subset \mathbb{R}^{K\times K}$ be the set of $K\times K$ symmetric positive semidefinite matrices. We estimate the true covariance matrix by solving

\begin{equation}
\min_{\theta\in \mathbb{S}_{+}^K}\;
\sum_{(i,j)\in \Omega_O} \left(\hat R^K_{ij}-\theta_{ij}\right)^2
\;+\;
\alpha\!\left[
m\!\!\sum_{(i,j)\in \Omega_M}\!\! [L \theta]_{ij}^2
+
(1-m)\!\!\sum_{(i,j)\in \Omega_O}\!\! [L \theta]_{ij}^2
\right],
\label{eq:min-pb-ij}
\end{equation}
where $L$ is a discretized Laplacian operator acting along the rows and columns of $\theta$, as further detailed in Subsection \ref{sec:Laplacian}. The coefficient $\alpha>0$ controls the trade-off between fidelity to the observed entries and smoothness of the resulting covariance estimate. The parameter $m\in(0,1]$ controls the relative influence of the Laplacian penalty in the missing and in the observed parts of the matrix, thereby interpolating between two estimation strategies. When $m=1$, the observed entries are treated as reliable anchor points and the regularization term is used primarily to extend their structure into the missing regions. Values of $m$ close to $0$, on the other hand, apply regularization mainly where data are observed, denoising the empirical covariance and stabilizing the matrix-completion problem.
Given the central role of $\alpha$ and $m$, it is crucial to identify practical, data-driven criteria for their selection; we propose such criteria based on an extensive simulation study, which we discuss in Section~\ref{sec:simul-study}.

\subsection{Laplacian regularization}\label{sec:Laplacian}
The discretized Laplacian $L$ acts differently on diagonal and off-diagonal entries of the covariance matrix. This separation ensures that missing variances are inferred only from neighboring variances along the main diagonal, while missing covariances are inferred from neighboring covariances.

Let $L_1$ denote the one-dimensional discrete Neumann Laplacian; that is, the $K \times K$ matrix
\begin{equation*}
L_1 =
\begin{bmatrix}
1 & -1 & 0 & \cdots & 0 & 0 \\
-1 & 2 & -1 & \ddots & \vdots & \vdots \\
0 & -1 & 2 & \ddots & 0 & 0 \\
\vdots & \ddots & \ddots & \ddots & -1 & 0 \\
0 & \cdots & 0 & -1 & 2 & -1 \\
0 & \cdots & 0 & 0 & -1 & 1
\end{bmatrix}.
\end{equation*}
For off-diagonal entries, we enforce smoothness in both row and column directions, with a discrete two-dimensional Laplacian:
\begin{equation}
    [L \theta]_{ij} = [L_1 \theta + \theta L_1]_{ij} = -\theta_{i-1,j} - \theta_{i+1,j} - \theta_{i,j-1} - \theta_{i,j+1} + 4\theta_{ij},
    \label{eq:Lij}
\end{equation}
with adaptations at the boundaries.
The Neumann Laplacian corresponds to a specific choice of boundary adaptation. That is, for $i=1$ or $i=K$, the second difference is computed using the available two points, which corresponds to a one-sided finite-difference approximation of the second derivative. This adaptation is only one of several possible choices concerning treatments of the boundaries that can be borrowed from standard numerical analysis, such as reflective, periodic, or higher-order finite-difference schemes (e.g., \citealp{quarteroni2006numerical}; \citealp{leveque2007finite}). 

For diagonal entries, we penalize deviations from smoothness along the main diagonal -- i.e.\ across neighboring variances. That is,
\begin{equation}
    [L \theta]_{ii} = [L_1 \mathrm{diag}(\theta)]_{i} = -\theta_{i-1,i-1} + 2\theta_{ii} - \theta_{i+1,i+1},
    \label{eq:Lii}
\end{equation}
again with boundary adaptations obtained by replacing missing neighbors with one-sided differences. In~\eqref{eq:Lii}, $\mathrm{diag}(A)$ denotes the vector of diagonal elements of any square matrix $A$.

The discretized Laplacian $L$ discourages abrupt local changes separately along the diagonal and the off-diagonal entries of $\theta$. An intuition behind this type of regularization is the following. Provided that $m$ is strictly smaller than 1, in the extreme case when the regularization parameter $\alpha$ in~\eqref{eq:min-pb-ij} becomes very large the solution tends toward a particularly simple form in which all variances become equal and all covariances converge to a common value, resembling a nugget-like structure.

In the next subsection, we rely on the observation that the two distinct regularizations for diagonal and off-diagonal elements of the covariance matrix can be obtained with the discrete one-dimensional Neumann Laplacian $L_1$ and by introducing suitable masks, as is detailed next. Additionally, we make the modelling choice of not calculating the Laplacian penalty in the super- and sub-diagonals, i.e.\ we avoid using variances to smooth covariances. This design choice prevents underestimation of variability in the presence of possible nugget effects, where variances may exhibit sharp discontinuities relative to nearby covariances. This too is achievable by suitably specifying the masks entering the optimization functional.

\subsection{Optimization problem in matrix form}
\label{sec:optimization-matrix-form}
It is easy to show that problem~\eqref{eq:min-pb-ij} can be expressed in matrix form by introducing suitable masks and weighting matrices that rule the interplay between the observed and the missing cells of the covariance matrix. 

Let $I$ be the $K \times K$ identity matrix, and $\bar{I}$ be the $K \times K$ mask taking value 0 in the diagonal and the first super- and sub-diagonals, and 1 elsewhere. Let $J$ denote the $K \times K$ matrix of all ones.
Let $P^O$ denote a $ K \times K $ mask such that $P^O_{ij} \;=\; \I{(i,j) \in \Omega_O}$. Let $P^m = \sqrt{m} P^O + \sqrt{1-m} (J - P^O)$. The quadratic optimization problem in matrix form then reads
\begin{align*}
    & \underset{\theta \in \mathbb{S}_{+}^K}{\text{min}} \left\{ 
        \left\| P^O \circ (\hat{R}^K - \theta) \right\|_F^2 \right. \\
    & \left. \qquad 
        + \; \alpha  \; \left( 
            \left\| P^m \circ \bar{I} \circ (L_1\theta + \theta L_1) \right\|_F^2 
            + \left\| P^m \circ I \circ \left( L_1 (I \circ \theta) J \right) \right\|_F^2 
        \right) 
    \right\}, \numberthis \label{eq:min-pb-theta}
\end{align*}
where $\|\cdot\|_F$ denotes the Frobenius norm.

\begin{proposition} \label{prop:wellposed}
The optimization problem~\eqref{eq:min-pb-theta} admits a unique minimizer $\theta^\star\in\mathbb{S}_+^K$.
\end{proposition}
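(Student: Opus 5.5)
The plan is to view \eqref{eq:min-pb-theta} as the minimization of a convex quadratic function $F(\theta)=\|P^O\circ(\hat R^K-\theta)\|_F^2+\alpha\,Q(\theta)$ over the closed convex cone $\mathbb{S}_+^K$. Here $Q$ denotes the penalty, a positive semidefinite quadratic form in $\theta$, since each term is a squared Frobenius norm of a linear map of $\theta$. The argument splits into two parts: existence, by continuity plus coercivity of $F$ on $\mathbb{S}_+^K$, and uniqueness, by strict convexity. Throughout, I assume $\mathcal J_O\neq\emptyset$ (otherwise the misfit vanishes and $\theta=0$ and $\theta=cJ$ already compete). I also assume the weights in $P^m$ are positive on the entries where \eqref{eq:min-pb-ij} penalizes, i.e.\ I work with the entrywise form \eqref{eq:min-pb-ij}.

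\emph{Existence.} $F$ is continuous, and $\mathbb{S}_+^K$ is closed, so it suffices to show $F(\theta)\to\infty$ as $\|\theta\|_F\to\infty$ within $\mathbb{S}_+^K$. For PSD $\theta$ we have $|\theta_{ij}|\le\sqrt{\theta_{ii}\theta_{jj}}$, so $\|\theta\|_F\to\infty$ forces $\max_i\theta_{ii}\to\infty$. Suppose, along such a sequence, the diagonal part of the penalty $\|L_1\operatorname{diag}(\theta)\|^2$ (restricted to the weighted entries) stayed bounded. Then, because the kernel of the Neumann Laplacian $L_1$ consists of constant vectors, $\operatorname{diag}(\theta)$ would differ from a constant vector $c\mathbf 1$ by a bounded amount, with $c\to\infty$. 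In particular $\theta_{jj}\to\infty$ for some $j\in\mathcal J_O$, so the misfit term $(\hat R^K_{jj}-\theta_{jj})^2$ diverges. Either way $F\to\infty$. Sublevel sets are therefore compact, and a minimizer exists. One has to check that the diagonal penalty really sees enough entries to pin the diagonal to near-constancy; with $m=1$, this needs the missing diagonal entries to be penalized, which \eqref{eq:min-pb-ij} does.

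\emph{Uniqueness.} If $\theta_1\neq\theta_2$ are both minimizers, convexity of $F$ forces $F$ to be affine on the segment between them. For a quadratic $F$ this means $D=\theta_1-\theta_2$ (symmetric, not necessarily PSD) satisfies $P^O\circ D=0$ and $Q(D)=0$. So I must show that the quadratic form $\|P^O\circ D\|_F^2+Q(D)$ is positive definite on symmetric matrices.
\begin{itemize}
\item \emph{Diagonal.} $Q(D)=0$ gives $L_1\operatorname{diag}(D)=0$ on the penalized entries, hence a constant diagonal. This constant vanishes because $D_{jj}=0$ for $j\in\mathcal J_O$.
\item \emph{Off-diagonal.} $(L_1D+DL_1)_{ij}=0$ for all penalized $|i-j|\ge2$, i.e.\ $D$ is discrete harmonic on the off-band region. $D$ also vanishes on $\Omega_O$.
\end{itemize}
I would run a discrete maximum-principle / connectivity argument. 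Starting from an observed zero, the vanishing five-point equations propagate zeros to neighbouring entries, which is cleanest when $m<1$ so that all off-band entries carry an equation.

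The main obstacle is the first super- and sub-diagonals. These are deliberately left unpenalized, so the entries $D_{i,i+1}$ with $i$ or $i+1$ in $\mathcal J_M$ act as free Dirichlet data for the harmonic problem: there are more unknowns than equations. My first attempt is to show that each such entry appears, with nonzero coefficient, in the equation at $(i-1,i+1)$ or $(i,i+2)$, and that those equations are forced, by the propagation above, to involve only one still-undetermined superdiagonal entry. That lets the entries be eliminated one at a time along the diagonal.

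If this fails for some missing patterns, I would fall back on restricting the difference argument to the feasible cone. Using $\theta_1,\theta_2\in\mathbb{S}_+^K$ together with the already established equality of their diagonals, I would combine the PSD bound on the $2\times2$ minors with the harmonic equations to rule out a nonzero $D$.
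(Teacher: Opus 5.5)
\textbf{There is a genuine gap: the uniqueness step cannot be completed, because uniqueness is false in general.}

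Your existence step is fine. In substance it is all the paper's proof actually establishes. The paper shows that no nonzero \emph{positive semidefinite} matrix lies in $\ker D\cap\ker A_2'$: a constant diagonal is killed by an observed variance, and a PSD matrix with zero diagonal vanishes. From this it asserts that the Hessian is positive definite. That gives coercivity on $\mathbb{S}_+^K$. It does not give strict convexity along $\theta_1-\theta_2$, which is generally indefinite.

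You correctly saw that uniqueness needs the quadratic form to be positive on every nonzero symmetric matrix vanishing on $\Omega_O$. The obstacle you flag cannot be removed by elimination or by $2\times 2$ minors. Every off-diagonal entry of $L_1$ equals $-1$; the Neumann adaptation only changes the diagonal. So the penalized entry $(i,i+2)$ of $L_1\theta+\theta L_1$ sees the first superdiagonal only through $-(\theta_{i,i+1}+\theta_{i+1,i+2})$. Neither the entries with $|i-j|\ge 3$ nor the diagonal penalty see it at all.

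Hence the symmetric matrix $E$ with $E_{i,i+1}=E_{i+1,i}=(-1)^i$ and zeros elsewhere is annihilated by both penalty terms, for every $\alpha$, $m$ and weighting. If $\mathcal{J}_O$ contains no two consecutive indices (e.g.\ $K=3$, $\mathcal{J}_O=\{1,3\}$), $E$ also vanishes on $\Omega_O$. Take $\hat R^K_{ij}=\delta_{ij}$ on $\Omega_O$. Then $\theta=I$ has objective $0$: its diagonal is constant, and $L_1I+IL_1=2L_1$ is tridiagonal. So does $I+tE$, which lies in $\mathbb{S}_+^K$ for $|t|\le 1/2$ by Gershgorin.

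This is exactly where your chain breaks. Every equation contains two undetermined superdiagonal entries with equal coefficients, so there is no anchor to start the elimination. The PSD fallback cannot help either, since the constraint is inactive at $I$.

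The failure is not confined to such sparse patterns. Take $m=1$ in \eqref{eq:min-pb-ij} and a single interior missing index $k$. Any symmetric difference vanishing on $\Omega_O$ is supported on row and column $k$, and the diagonal penalty forces $D_{kk}=0$. The remaining entries $v_j=D_{kj}$ need only satisfy $4v_j-v_{j-1}-v_{j+1}=0$ for $|j-k|\ge 2$, with the Neumann modification at $j=1,K$. That is $K-3$ equations in $K-1$ unknowns, leaving a two-dimensional family of directions, again with the interior minimizer $I$.

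What does hold in general is existence. The fitted values $P^O\circ\theta^\star$ and the weighted Laplacian values are also unique, since the objective is a strictly convex function of these linear images of $\theta$. Uniqueness of $\theta^\star$ itself needs an extra hypothesis on $(\mathcal{J}_O,m)$: no nonzero symmetric matrix vanishing on $\Omega_O$ may be annihilated by the weighted penalties. Your elimination would have to verify that condition pattern by pattern; it cannot be proved unconditionally, and the paper's proof does not address it.
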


The proof of Proposition~\ref{prop:wellposed} is deferred to Appendix~\ref{appendix:proof}. To ensure that the solution to problem~\eqref{eq:min-pb-theta} is a covariance, we reparametrize the problem by leveraging the fact that any symmetric and positive semidefinite matrix can be factorized as $\theta = \gamma \gamma^\top$, where $\gamma \in \mathbb{R}^{K \times K}$. This change of variables eliminates the need to explicitly enforce symmetry and positive definiteness, allowing us to recast~\eqref{eq:min-pb-theta} as an unconstrained problem:

{\small
\begin{align*}
    &\underset{\gamma \in \mathbb{R}^{K \times K}}{\text{min}} \left\{ 
        \left\| P^O \circ (\hat{R}^K - (\gamma\gamma^T)) \right\|_F^2 \right.\\
        &\left. \qquad
        + \; \alpha  \; \left( 
            \left\| P^m \circ \bar{I} \circ (L(\gamma\gamma^T) + (\gamma\gamma^T) L) \right\|_F^2 
            + \left\| P^m \circ I \circ \left( L (I \circ (\gamma\gamma^T)) J \right) \right\|_F^2 
        \right) 
    \right\}. \numberthis \label{eq:min-pb-gamma}
\end{align*}}

For a specified choice of the parameters $\alpha$ and $m$, the optimization problem is solved using the BFGS algorithm -- an iterative Quasi-Newton method for solving unconstrained nonlinear optimization problems.
Note that the mapping $\gamma \to \gamma \gamma^T$ entails orthogonal invariances ($\gamma Q$ and $\gamma$ yield the same $\gamma \gamma^T$ for any orthogonal $Q$), which introduces multiple equivalent representations of the same solution and breaks the convexity of the problem. Although the reformulated objective is nonconvex in $\gamma$, BFGS exhibits stable convergence in practice and often yields high-quality solutions efficiently.


We initialize the matrix $\gamma$ using the singular value decomposition of the empirical covariance matrix, where we assign value zero to the missing cells. From synthetic experiments, we noticed that convergence typically speeds up if the missing cells in the empirical covariance matrix in the diagonal and the first super- and sub-diagonal in $\hat{R}^K$ are linearly interpolated with the observed values along that diagonal. Let $\hat{R}_0^K$ denote the initialization of the $K$-resolution covariance matrix. Then, $\gamma_0 = U\Lambda^{1/2}$, where $U$ and $\Lambda$ are such that $\hat{R}_0^K = U\Lambda U^\top$.

\section{Simulation study}
\label{sec:simul-study}
With the synthetic experiments reported in this section, our purposes are (i) to assess the quality of the proposed covariance estimator against other existing approaches for covariance estimation, and (ii) to identify practical, data-driven criteria for selecting the hyperparameters $(\alpha,m)$ in~\eqref{eq:min-pb-ij}.

\subsection{Data generation}
\label{sec:data-sim}
We generate $n$ functional data from a zero-mean Gaussian process on $[0,1]$ with a given covariance kernel. Each trajectory is discretized on a uniform grid of $K=50$ points, so that it can be represented as in~\eqref{eq:emp-data}. To simulate the fragmented regime of partial observation, we fix a subset $J_M \subset [K]$ of indices corresponding to missing data, so that the simulated trajectories are unobserved on $30 \%$ of the points of the grid. As the indices of $J_M \subset [K]$ are sampled randomly, some of them may be contiguous generating gaps of missing information. The mask $P^O\in\{0,1\}^{K\times K}$, introduced in the previous section, is defined so that $P^O_{ij} = 0$ if $i \in J_M$ or $j \in J_M$, and 1 otherwise. The mask is applied to the empirical covariance matrix computed from the $n$ discretized trajectories, yielding partially observed covariance matrices.

We consider two main covariance structures: a stationary and isotropic Matérn model $R_S$ and a fully nonstationary and anisotropic covariance $R_{NS}$. The stationary covariance is defined by the Matérn kernel
\begin{align*}
    r_S(d) \;=\; \sigma^2 \cdot \frac{2^{1-\nu}}{\Gamma(\nu)} \cdot \Big(\tfrac{d}{\phi}\Big)^\nu \cdot K_\nu\!\Big(\tfrac{d}{\phi}\Big),
\end{align*}
with $d$ the distance between grid locations, $\nu>0$ smoothness, $\phi>0$ range, and $K_\nu$ the modified Bessel function of the second kind. We set $(\sigma^2,\nu,\phi)$ as in the illustrations, and form $R_S$ by evaluating $r_S$ on the $K$-point grid.
To realistically represent a process with nonstationarity and anisotropy in its second order structure, we construct $R_{NS}$ from a $50\times 50$ block extracted from an empirical covariance surface estimated on real data (see Section~\ref{sec:case-study}). This yields heterogeneous local variances and direction-dependent correlations.

\subsection{Evaluation metrics}
\label{sec:metrics}
Let $\bar{R}^K = {\bar{r}_{ij}}$ denote a reference covariance matrix, resulting from the discretization of a reference covariance kernel $\bar{r}$ at $K$-resolution, in accordance to~\eqref{eq:cov_disc}. Let $\hat{\theta} = \hat{\gamma} \hat{\gamma}^T$, where $\hat{\gamma}$ denotes the solution to~\eqref{eq:min-pb-gamma}. We define the total, observed-only, and missing-only root mean squared errors with respect to the reference covariance matrix as

{\small
\begin{align}
    \mathrm{RMSE}_{\mathrm{tot}}(\bar{R}^K,\hat{\theta}) 
    &= \frac{1}{\sqrt{K^2}}\lVert \bar{R}^K - \hat{\theta} \rVert_F = \Bigg(\frac{1}{K^2}\sum_{i,j=1}^K \big(\bar{r}_{ij}-\hat{\theta}_{ij}\big)^2 \Bigg)^{1/2}, \label{eq:rmse-tot}\\[2mm]
    \mathrm{RMSE}_{O}(\bar{R}^K,\hat{\theta}) 
    &= \frac{1}{\sqrt{N_O}}\lVert P^O \circ ( \bar{R}^K - \hat{\theta}) \rVert_F = \Bigg(\frac{1}{N_{O}}\sum_{i,j=1}^K P^O_{ij}\,\big(\bar{r}_{ij}-\hat{\theta}_{ij}\big)^2 \Bigg)^{1/2}, \label{eq:rmse-obs}\\[2mm]
    \mathrm{RMSE}_{M}(\bar{R}^K,\hat{\theta}) 
    &= \frac{1}{\sqrt{N_M}}\lVert (J - P^O) \circ ( \bar{R}^K - \hat{\theta}) \rVert_F = \Bigg(\frac{1}{N_{M}}\sum_{i,j=1}^K (1-P^O_{ij})\,\big(\bar{r}_{ij}-\hat{\theta}_{ij}\big)^2 \Bigg)^{1/2}, \label{eq:rmse-miss}
\end{align}}

where $N_O=\sum_{ij} P^O_{ij}$ and $N_M=K^2-N_O$ denote the number of observed and missing entries, respectively.

In the synthetic experiments reported in Section~\ref{sec:comp-methods}, we compare the performance of our estimation method with that of other state-of-the-art methods by considering as reference covariance matrix $\bar{R}^K$ both the true covariance matrix $R^K$ and the empirical covariance matrix $\hat{R}^K$, the latter obtained from a sample of $n$ synthetic functional data.

Since the true covariance kernel is unobservable in real applications, it is natural to rely on errors~\eqref{eq:rmse-tot}--\eqref{eq:rmse-miss} computed with respect to the empirical covariance $\hat{R}^K$. This is supported by well-known theoretical results showing that $\hat{R}^K$ is the data-driven approximation of the true covariance and concentrates around it under mild conditions (e.g., \citealp{yao2005functional, johnstone2009consistency, descary2019recovering, delaigle2021estimating}). Consequently, errors computed with respect to $\hat{R}^K$ provide a natural proxy for the errors computed with respect to $R^K$.
In Section~\ref{sec:increasing_nsamples}, we complement this theoretical perspective with a numerical study investigating this behaviour for increasing sample sizes. Building on these insights, Section~\ref{sec:hyperparam-sel} introduces practical criteria for hyperparameter selection based on the available parts of the empirical covariance.

\subsection{Comparison with existing estimation methods}
\label{sec:comp-methods}
We compare our estimator ({\em LapRegulMC}) to two alternative methods, namely the low-rank matrix completion method proposed in~\citet{descary2019recovering} ({\em Low-rankMC}), and a parametric geostatistical estimator ({\em Geostationary}). The latter is obtained by fitting the parameters $(\sigma^2,\nu,\phi)$ of the Matérn covariance via least squares evaluated in the observed entries of the empirical covariance.

We do not include a comparison with the method of~\citet{delaigle2021estimating}. As anticipated in Section~\ref{sec:notation-and-problem-definition}, the main distinction between this approach and that of~\citet{descary2019recovering} lies in the adopted representation of the functional data. The former relies on a basis expansion, while the latter is developed in a discretized setting. Beyond this technical difference, however, both methods rely on the same working assumption of {\em serrated domain} and lead to very similar results (see~\citet{delaigle2021estimating}). For this reason, we considered it redundant to include both in the comparison.

The methods are compared across five different covariance structures with varying degrees of anisotropy and nonstationarity. This is achieved by considering a sequence of covariance structures indexed by $p \in [0,1]$ that move from the stationary and isotropic Matérn model $R_S$ ($p=1$), to the fully nonstationary and anisotropic covariance $R_{NS}$ ($p=0$). For intermediate $p$, we define the population covariance matrix as a convex combination of the two structures, i.e.
\begin{align*}
    R(p) \;=\; p\, R_S^* \;+\; (1-p)\, R_{NS}^*,
\end{align*}
where $R_S^*$ and $R_{NS}^*$ are scaled versions of $R_S$ and $R_{NS}$ having the same trace, so that overall marginal variance is comparable across $p$. Specifically, we let $p$ vary in the set $\{0, 0.25, 0.5, 0.75, 1\}$.

For each $p$, we do a Monte Carlo simulation with $30$ replicates, each time generating $n=200$ functional data. Since some of the competing methods require hyperparameter tuning that may depend on the specific pattern of missing indices $\mathcal{J}_M$, we draw $J_M$ at random once for each $p$ and keep it fixed across all replicates.

The {\em Geostationary} method does not involve hyperparameters.
For {\em Low-rankMC}, a target rank must be specified. Following~\citet{descary2019recovering}, we determine the rank by visual inspection of $\mathrm{RMSE}_{\mathrm{obs}}(\hat{R}^K, \hat{\theta})$, identifying the elbow point. The inspection is performed once for each $p$, yielding the ranks ${10,7,5,4,3}$ for $p \in \{0,0.25,0.5,0.75,1\}$, respectively. The ranks are then held fixed across all replicates.
For {\em LapRegulMC}, the hyperparameters at this stage are selected automatically via a grid search, retaining the pair $(\alpha, m)$ that minimizes $\mathrm{RMSE}_{\mathrm{tot}}(\hat{R}^K, \hat{\theta})$. To ensure a fair comparison with {\em Low-rankMC}, hyperparameter selection is likewise performed only once for each $p$ and the resulting $(\alpha, m)$ are used consistently across all replicates.

The results are synthesized by the boxplots in Figure~\ref{fig:methods-comparison}, which compare the errors $\mathrm{RMSE}_{\text{tot}}(R^K, \hat{\theta})$ achieved by all three methods across the Monte Carlo replicates and for each $p$.
As expected, the {\em Low-rankMC} method fails to recover the covariance structure in all scenarios. This confirms that the serrated-domain assumption is indeed a necessary condition for the validity of the method, and cannot be relaxed without severe loss of accuracy.
When $p=1$, the {\em Geostationary} estimator serves as a natural benchmark and shows that our method performs competitively with a correctly specified parametric model. As the covariance structure departs from the Matérn form (smaller $p$), the performance of {\em Geostationary} deteriorates, whereas the estimation error of our method remains consistently low across all values of $p$. Figure~\ref{fig:methods-comparison-emp} in Appendix~\ref{appendix:comparison-with-existing-methods} shows analogous results for the total estimation errors evaluated with respect to the empirical covariance. 
Overall, these experiments highlight the effectiveness of our method in accurately capturing the second-order structures, regardless of the degree of stationarity or isotropy in the data generating process.

\begin{figure}[ht]
    \centering
    \includegraphics[width=0.8\linewidth]{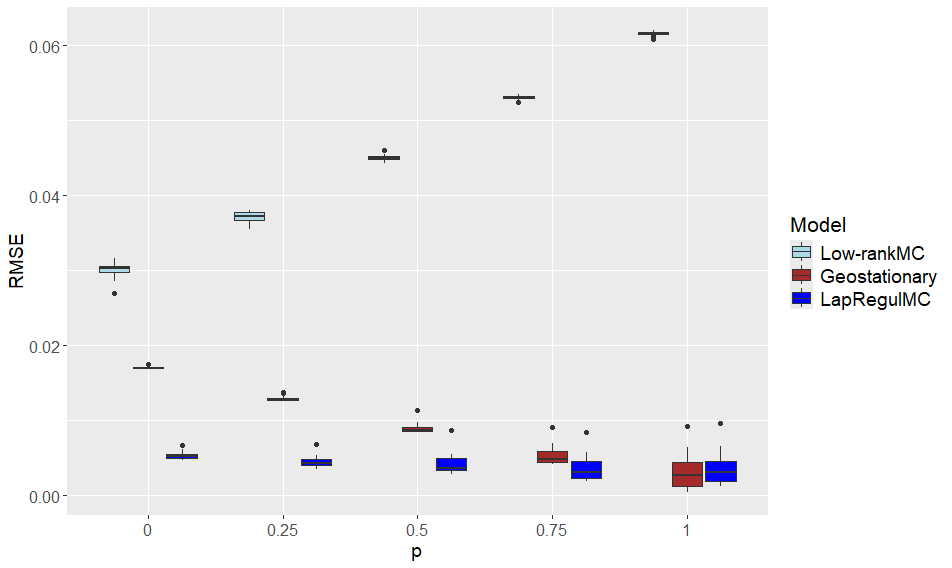}
    \caption{Comparison of the performance in terms of $\text{RMSE}_{\mathrm{tot}}(R^K, \hat{\theta})$ for the three models for covariance estimation: the method proposed in \cite{descary2019recovering} {\em (Low-rankMC)}, a parametric geostationary method {\em(Geostationary)} and our method {\em (LapRegulMC)}.}
    \label{fig:methods-comparison}
\end{figure}

\subsection{The impact of relying on finite sample size}
\label{sec:increasing_nsamples}
A detailed simulation study reported in Appendix~\ref{appendix:increasing_nsamples} examines the impact of selecting the hyperparameters based on the empirical covariance matrix from a finite sample, rather than the true covariance, on the performance of the proposed estimation routine. The results show that the hyperparameter selected by minimizing the error with respect to the empirical covariance yield estimation performance essentially identical to that obtained using the (infeasible) true covariance, even for moderate sample sizes (e.g., $n=100$). Moreover, the distributions of the hyperparameters selected by minimizing $\mathrm{RMSE}_{\mathrm{tot}}(\hat{R}^K, \hat{\theta})$ or $\mathrm{RMSE}_{\mathrm{tot}}(R^K, \hat{\theta})$ become increasingly similar as the sample size grows. This analysis supports the identification of practical criteria for hyperparameter selection based on the estimation error with respect to the empirical covariance, which we discuss next.

\subsection{Practical criteria for hyperparameter selection}
\label{sec:hyperparam-sel}
We conduct a Monte Carlo simulation with $B=30$ replicates, each generating $n=200$ functional data following the procedure in Section~\ref{sec:data-sim}. As the choice of the hyperparameters is expected to depend on the pattern of missing values, the subset $J_M$ of missing indices is sampled randomly and kept fixed across the MC replicates. For each partially observed empirical covariance matrix and for a fixed pair $(\alpha,m)$, we estimate the full covariance matrix by applying the procedure described in Section~\ref{sec:optimization-matrix-form}, and compute the estimation errors~\eqref{eq:rmse-tot}–\eqref{eq:rmse-miss} evaluated with respect to the empirical covariance $\hat{R}^K$. In the following, $\mathrm{RMSE}_{\bullet}$ is short for $\mathrm{RMSE}_{\bullet} (\hat{R}^K, \hat{\theta})$.

\paragraph{Selection of parameter $\alpha$.}
With the parameter $m$ fixed, i.e. $m=0.6$, we inspect the behavior of the errors as function of $\log_{10}\alpha$. Figure~\ref{fig:param-alpha-stat} reports the functional boxplots of the resulting errors under the isotropic and stationary covariance structure introduced in Section~\ref{sec:data-sim}.
In all three panels, the dashed vertical line marks the value of $\alpha$ minimizing the mean of $\mathrm{RMSE}_{\mathrm{tot}}$ across replicates. A first observation is that the minimizer of $\mathrm{RMSE}_{\mathrm{tot}}$ is close to the minimizer of $\mathrm{RMSE}_{M}$. Although in practical applications neither $\mathrm{RMSE}_{\mathrm{tot}}$ nor $\mathrm{RMSE}_{M}$ are directly observable, our goal is to choose $\alpha$ yielding a estimation error close to the minimum of $\mathrm{RMSE}_{\mathrm{tot}}$.

The role of $\alpha$ in the objective implies that $\mathrm{RMSE}_{O}$ grows monotonically with $\alpha$: as $\alpha$ increases, the estimator departs from an exact fit to the observed entries and encourages smoother solutions, thereby increasing the error relative to the observed covariance.
Interestingly, the $\alpha$ that minimizes $\mathrm{RMSE}_{\mathrm{tot}}$ consistently corresponds to the elbow point of the $\mathrm{RMSE}_{O}$ curve, i.e., the largest $\alpha$ before $\mathrm{RMSE}_{O}$ undergoes a sustained increase. The same pattern holds under the nonstationary covariance model of Section~\ref{sec:data-sim} (see Appendix~\ref{appendix:simulation-study}, Figure~\ref{fig:param-alpha-nonstat}), and for different values of $m$ (see Appendix~\ref{appendix:simulation-study}, Figures~\ref{fig:stat-paramsvarying} and~\ref{fig:nonstat-paramsvarying}, left panels).
These findings suggest a simple and effective practical rule: select $\alpha$ at the elbow of the $\mathrm{RMSE}_{O}$ curve. This criterion provides a data-driven proxy for minimizing $\mathrm{RMSE}_{\mathrm{tot}}$, even in the absence of ground-truth knowledge, and provides low total estimation error in practice.

\begin{figure}[ht]
    \centering
    \includegraphics[width=0.9\linewidth]{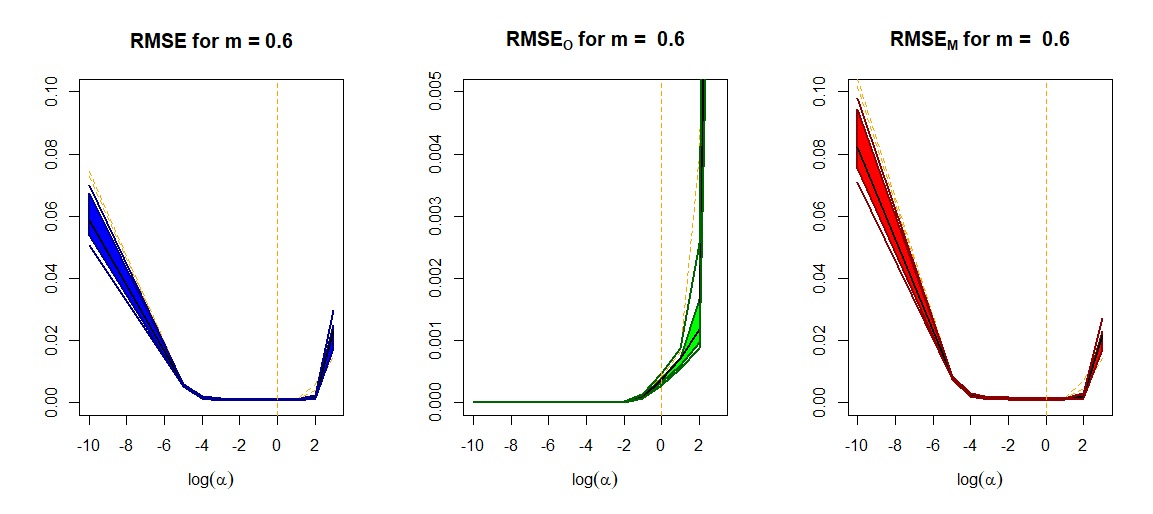}
    \caption{Functional boxplots of the curves of RMSE generated per each value of the logarithm of $\alpha$ over 30 simulations, for $m=0.6$ in the stationary scenario. Each functional boxplot is related to a different error: $\mathrm{RMSE}_{\mathrm{tot}}$ (blue), RMSE$_O$ (green), RMSE$_M$ (red). The point $log(\alpha)=-10$ on the x-axis serves as a fictitious representation graphically introduced to illustrate the value of RMSE for $\alpha = 0$. The black curve indicates the median error over all replicates. In all three plots, the dashed vertical line is in correspondence the minimum of the mean of $\mathrm{RMSE}_{\mathrm{tot}}$ over all replicates. In the plot of $\mathrm{RMSE}_{O}$, the vertical line indicates the identified criterion for selecting $\alpha$. The dashed orange curves are outliers of the MC sample,  identified by the functional boxplots.}
    \label{fig:param-alpha-stat}
\end{figure}


\paragraph{Selection of parameter $m$.}

With $\alpha$ set following the criterion identified above, we now investigate the effect of $m$ on the estimation error. Figure~\ref{fig:param-m-stat} reports the boxplots of the errors under the isotropic and stationary covariance model of Section~\ref{sec:data-sim}. As before, the dashed vertical line indicates the value of $m$ minimizing the mean of $\mathrm{RMSE}_{\mathrm{tot}}$ across replicates. We observe that the minimizer of $\mathrm{RMSE}_{\mathrm{tot}}$ is close to the minimizer of $\mathrm{RMSE}_{M}$, confirming that the latter provides a useful proxy for the former. Since neither $\mathrm{RMSE}_{\mathrm{tot}}$ nor $\mathrm{RMSE}_{M}$ are available in practice, our goal is to select $m$ that approximates the minimizer of $\mathrm{RMSE}_{\mathrm{tot}}$.

The behavior of $\mathrm{RMSE}_{O}$ is monotone in $m$: as $m$ increases, the weight of the regularization shifts toward the missing entries, while the observed entries move closer to a perfect fit of the empirical covariance. Consequently, $\mathrm{RMSE}_{O}$ decreases with $m$. The value of $m$ that minimizes $\mathrm{RMSE}_{\mathrm{tot}}$ coincides with the elbow of the $\mathrm{RMSE}_{O}$ curve, i.e., the largest $m$ before $\mathrm{RMSE}_{O}$ undergoes a sustained decrease. A similar pattern is observed under the nonstationary covariance model (Appendix~\ref{appendix:simulation-study}, Figure~\ref{fig:param-m-nonstat}), and for different values of $\alpha$ (see Appendix~\ref{appendix:simulation-study}, Figures~\ref{fig:stat-paramsvarying} and~\ref{fig:nonstat-paramsvarying}, right panels).
These findings indicate that selecting $m$ at the elbow of the $\mathrm{RMSE}_{O}$ curve offers a practical criterion for approximating the minimizer of $\mathrm{RMSE}_{\mathrm{tot}}$, ensuring good overall estimation performance without requiring ground-truth information.


\begin{figure}[ht]
    \centering
    \includegraphics[width=0.9\linewidth]{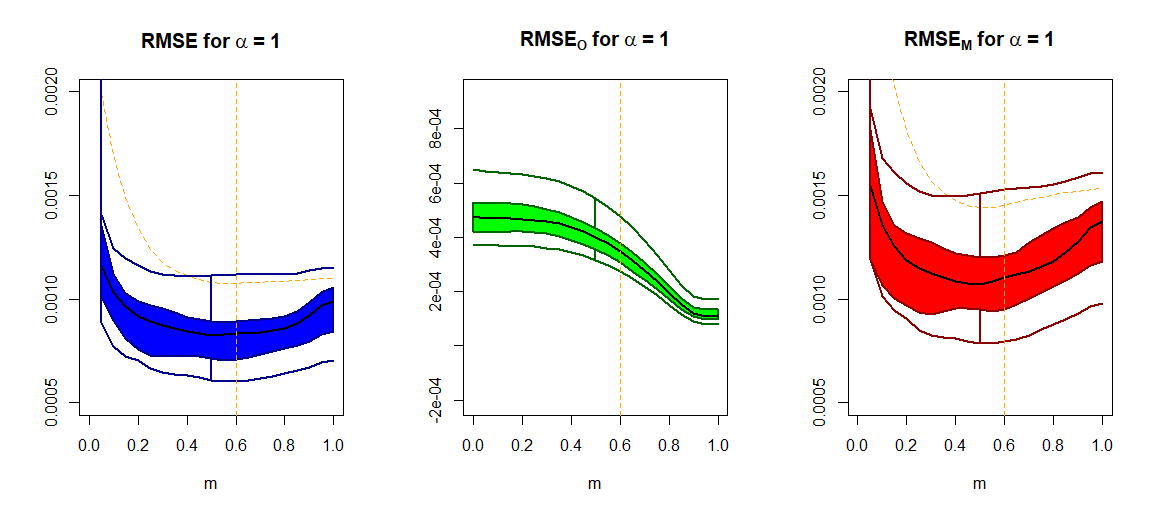}    
    \caption{Functional boxplots of the curves of RMSE generated for a grid of values of $m \in [0,1]$ over 30 simulations, with $\alpha = 1$ in the stationary scenario. Each functional boxplot is related to each type of error: $\mathrm{RMSE}_{\mathrm{tot}}$ (blue), RMSE$_O$ (green), RMSE$_M$ (red). The black curve indicates the median error over all replicates. In all three plots, the dashed vertical line is in correspondence the minimum of the mean of $\mathrm{RMSE}_{\mathrm{tot}}$ over all replicates. The vertical line in the plot of $\mathrm{RMSE}_{O}$ indicates the identified criterion for selecting $\alpha$. The dashed orange curves are outliers of the MC sample, identified by the functional boxplots.}
    \label{fig:param-m-stat}
\end{figure}


\paragraph{Visualization of the resulting estimate}
Figure~\ref{fig:preliminary-visualization} revisits the example introduced in Figure~\ref{fig:reconstruction-state-of-the-art} and displays the covariance estimate obtained with our method and the optimal choice of parameters according to the criteria outlined in this section. Unlike the results produced by state-of-the-art methods, our approach successfully recovers the covariance structure underlying the functional data in the fragmented regime.

\begin{figure}[ht]
    \centering
    \includegraphics[width=0.8\linewidth]{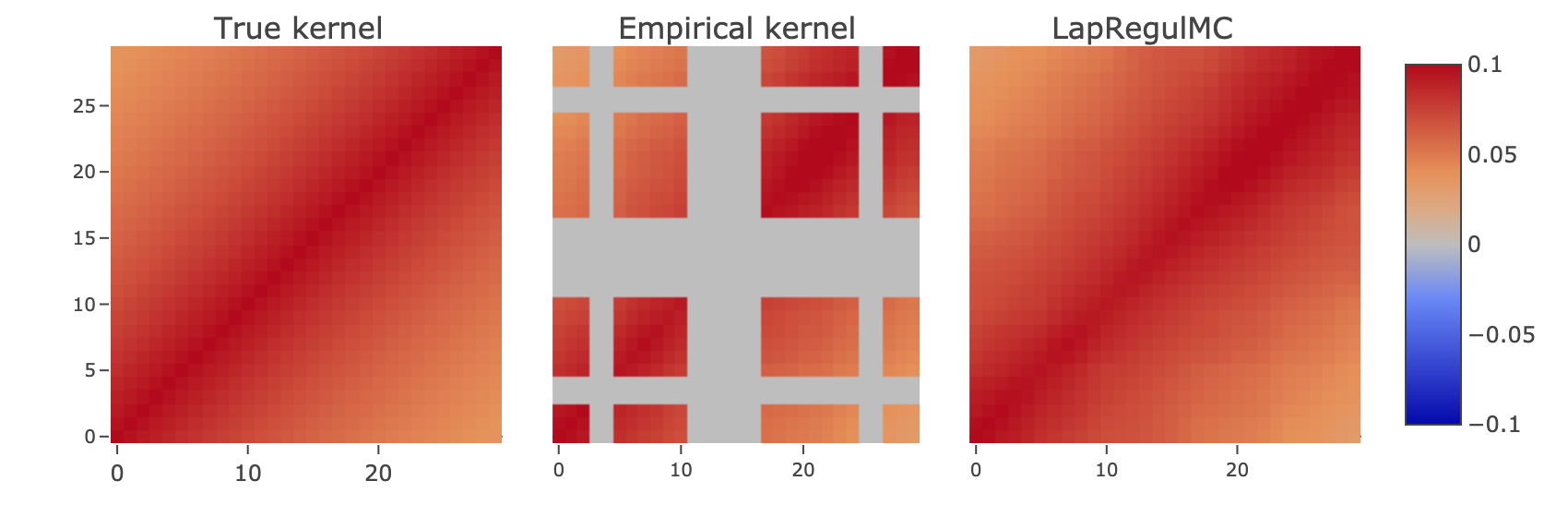}
    \caption{Example of covariance kernel estimation with state-of-the-art methods for partially observed functional data. From the left: true covariance kernel; empirical covariance kernel following~\eqref{eq:kernel_estimator_fragmented_regime}; covariance estimation with our Laplacian regularization approach ({\em LapRegulMC}). Other details are as in Figure~\ref{fig:reconstruction-state-of-the-art}.}
    \label{fig:preliminary-visualization}
\end{figure}


\section{Case study}
\label{sec:case-study}
We apply the methodology presented in Section~\ref{sec:method} to SBAS-processed SAR images of the Phlegraean Fields (Italy), an area prone to seismic and bradyseismic events. The parameters $\alpha$ and $m$ are selected following the data-driven criteria identified in Section~\ref{sec:hyperparam-sel}. After detailing how our framework generalizes to functional data with a two-dimensional domain, we estimate the covariance of the data and show its use for data reconstruction and scenario simulation.

\subsection{SBAS-Processed SAR data}
\label{sec:dinsar-data}
Synthetic Aperture Radar (SAR) is a remote sensing technology capable of acquiring high-resolution images regardless of lighting or weather conditions. Each pixel in a SAR image is associated to a complex number representing the reflected signal from the Earth’s surface. Differential SAR Interferometry (DInSAR) is a well-established microwave remote sensing technique that elaborates SAR data to measure ground deformation with centimeter to millimeter accuracy~\citep{gabriel1989mapping}. The DInSAR rationale is based on the use of the phase difference between two SAR images acquired over the same area at different times – i.e. the interferogram – to retrieve surface displacements occurred within the observation interval. The increasing availability of SAR data collected regularly over time made it possible to use the temporal evolution of interferograms to track the evolution of displacement phenomena \citep{ferretti2001permanent,berardino2002new}, rather than limiting the analysis to deformations occurring in a single time window. Multi-temporal DInSAR algorithms properly combine the information available from a set of independent interferograms to estimate the deformation time series of an observed area. Among all multi-temporal DInSAR techniques, the SBAS methodology \citep{berardino2002new} takes advantage of sequences of SAR data to generate accurate ground displacement time series \citep[][e.g.]{casu2006quantitative,manunta2019parallel} also at different spatial scales~\citep{bonano2012long}. The {\em temporal coherence} is an index in $[0,1]$ that quantifies the reliability of the retrieved time series at each pixel. In common practice, pixels with temporal coherence below 0.8 are considered to not provide trustworthy ground displacement measurements and are treated as missing.

\subsection{Preprocessing}
Our dataset consists of $n=391$ georeferenced SBAS-processed images measuring ground displacement in the Phlegraean Fields. After coherence thresholding, missingness occurs at the same pixel locations across all images. The SAR images have been acquired by the Copernicus Sentinel-1 constellation from 24/03/2015 to 22/02/2023 along descending orbits (Track 22). As illustrated in Figure~\ref{fig:coherence}, coherence is particularly low over water bodies, where SAR measurements are completely unreliable. Therefore, we restrict the analysis to a reliable subwindow of $101 \times 101$ pixels (red square in Figure~\ref{fig:coherence}), each covering $80\times80$ square meters.  

Prior to estimating the covariance, we remove temporal autocorrelation from the displacement series at each pixel. Specifically, after checking for stationarity, we fit moving average processes to each time series and retain the residuals. This preprocessing step is thoroughly described in Appendix~\ref{appendix:preprocessing}. The resulting residual maps, one per acquisition date, are treated as independent realizations of the underlying spatial process. This enables us to estimate the covariance kernel from $n=391$ independent samples of the process.

\begin{figure}[ht]
    \centering
    \includegraphics[width=1\linewidth]{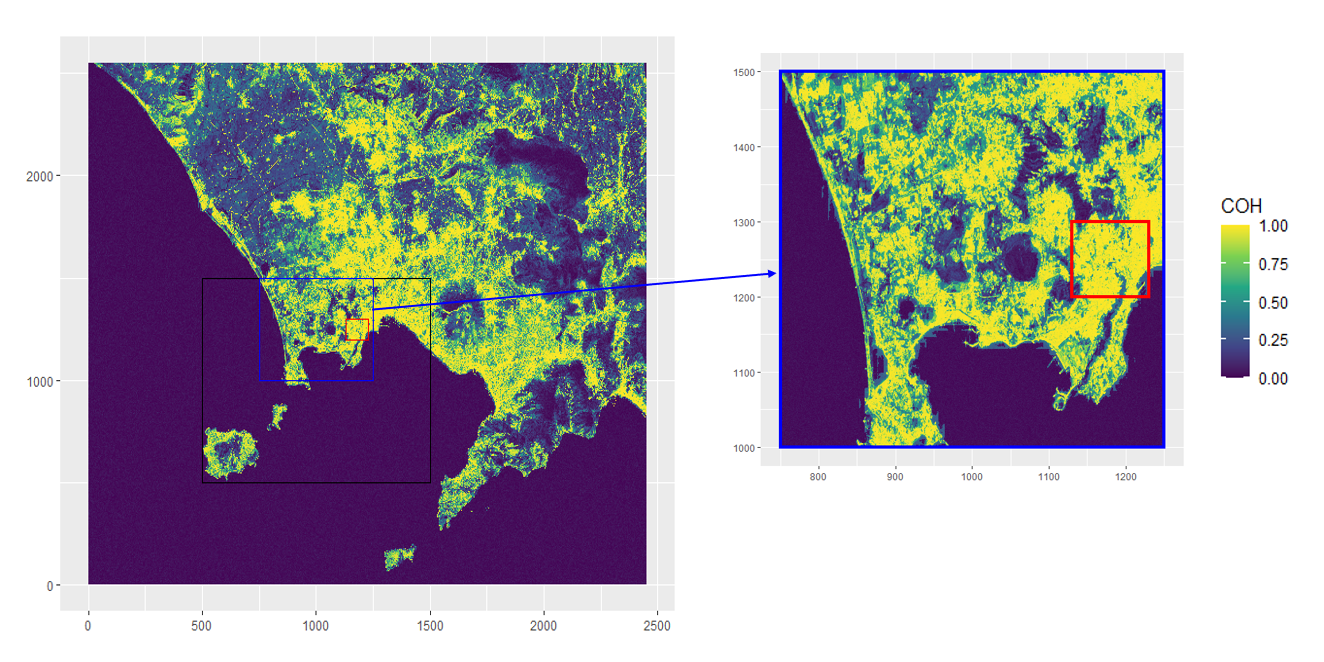}
    \caption{Temporal coherence values per pixel in the Phlegraean Fields (Italy). Pixels with coherence below 0.8 are treated as missing.}
    \label{fig:coherence}
\end{figure}

\subsection{Covariance estimation in two dimensions}
\label{sec:cov-rec-2D}
Each sample is defined on a two-dimensional grid, requiring regularization in both vertical and horizontal directions. To apply our Laplacian framework, we flatten each $101\times101$ grid into a one-dimensional vector by column-wise unrolling. This preserves vertical adjacency and, through indexing, also maintains horizontal neighborhood relationships. The covariance tensor between spatial locations can thus be represented as a $10201\times10201$ matrix with 41.25\% missing entries due to filtering based on the coherence values.  

To manage the high dimensionality and the nonstationary covariance structure, we initialize $\gamma$ as described in Section~\ref{sec:optimization-matrix-form} and impose a low-rank constraint during estimation. In accordance with the criteria prescribed in Section~\ref{sec:hyperparam-sel}, the parameter $\alpha$ is selected at the elbow point of the $\mathrm{RMSE}_O$ curve, yielding $\alpha_{\text{opt}}=1$, while $m$ is selected at the onset of a sharp decrease in $\mathrm{RMSE}_O$, yielding $m_{\text{opt}}=1$ (see Appendix~\ref{appendix:case-study-hyp-sel}, Figure~\ref{fig:error-alpha-m-cs}). Note that these values are consistent with the optimal settings found in Section~\ref{sec:simul-study} in the anisotropic and nonstationary scenario. Supporting plots are provided in Appendix~\ref{appendix:case-study-hyp-sel}.

Figure~\ref{fig:rec-cov-303} shows an illustrative estimation for three consecutive columns (dimension $303\times303$). The method is shown to successfully restore smooth spatial continuity both within and across columns, demonstrating that the unrolling preserves two-dimensional adjacency.

\begin{figure}[ht]
    \centering
    \includegraphics[width=0.8\linewidth]{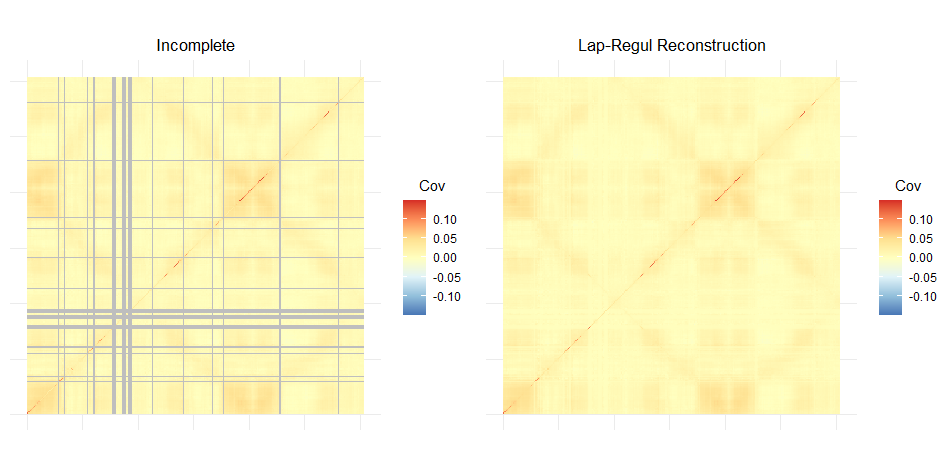}
    \caption{Illustrative covariance estimation for three consecutive columns of the $101\times101$ grid.}
    \label{fig:rec-cov-303}
\end{figure}

\paragraph{Comparison with other methods.}
As in the synthetic experiments, we benchmark our method against the Low-rank Matrix Completion~\citep{descary2019recovering} and a parametric geostatistical Matérn model {(\em Geostationary)}. Figure~\ref{fig:rec-cov-per-pix} displays, for each method, the estimated covariance maps for three reference pixels, which are marked by black dots in the plots. For pixels 5000 and 50 (top and bottom rows), which are fully observed across all images, the task is primarily to reconstruct their covariance with the missing pixels. Pixel 49 (central row) presents a more challenging task, as it is entirely unobserved in the dataset and its covariances with all other pixels is completely unobserved.

The results show that the {\em Geostationary} method performs well at capturing local spatial dependencies, particularly the covariance with nearby pixels—even, when those neighbors are unobserved. However, it fails to capture more complex spatial relationships, such as those driven by ground displacement, which do not strictly follow geographic proximity. On the other hand, the {\em Low-rankMC} method better captures these broader displacement patterns, but struggles to reconstruct the covariance structure of pixels that were themselves unobserved, like pixel 49. Our Laplacian Regularized Matrix Completion method appears to combine the strengths of both. It is capable of reconstructing fine scale spatial structure while also recovering broader, nonlocal dependencies driven by latent geophysical effects. This is particularly evident in the case of pixel 49, where our method successfully reconstructs the entire covariance structure despite the lack of any direct observations for that pixel. These results illustrate the robustness and flexibility of our approach in handling both partially and fully missing regions of the covariance matrix.

\begin{figure}[ht]
    \centering
    \includegraphics[width=0.9\linewidth]{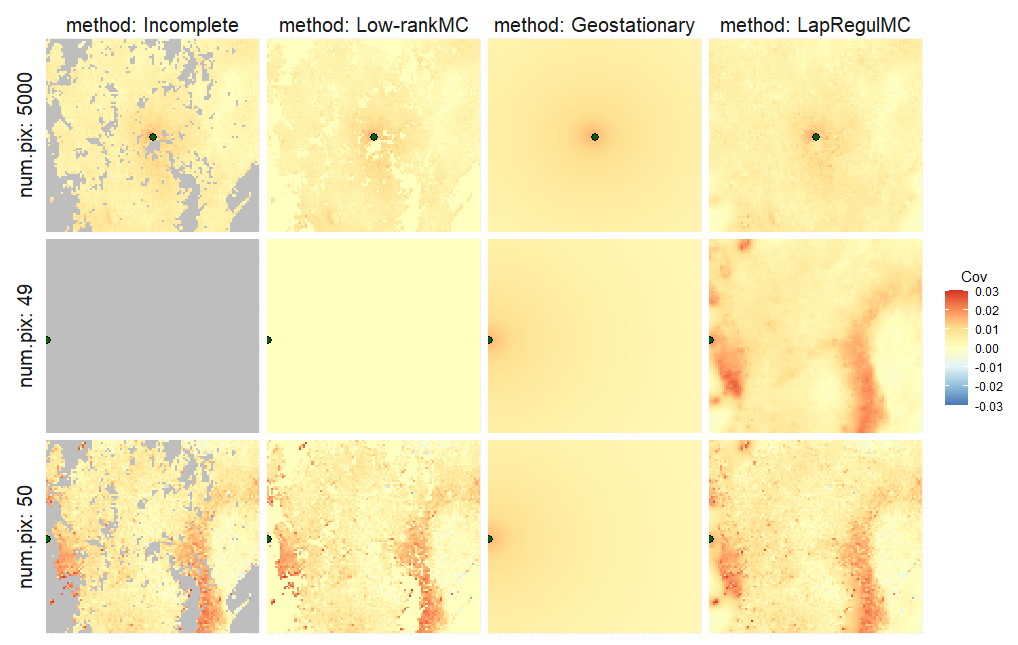}
    \caption{Comparison of per-pixel covariance estimates. Each row is referred to a reference pixel (indicated by a larger dot), and each column shows the covariance map produced by a different method: the initial incomplete empirical covariance, the Low-rank Matrix Completion method from \cite{descary2019recovering} {\em (Low-rankMC)}, a parametric geostatistical approach {\em (Geostationary)}, and the Laplacian Regularized Matrix Completion method {\em (LapRegulMC)} proposed in this work. In each heatmap, the value in a given pixel represents its estimated covariance with the reference pixel.}
    \label{fig:rec-cov-per-pix}
\end{figure}

\subsection{Applications of the estimated covariance}
\label{sec:case-study-2}
\paragraph{Images reconstruction.}

The estimated covariance can be employed to reconstruct the partially observed images following the reconstruction method for partially observed functional data of~\cite{kraus2015components}. This method aims at reconstructing missing parts of functional data via the best linear predictor, which depends on the cross- and auto-covariance operators. To improve the numerical stability of the resulting reconstruction, we apply a Ridge regularization with $\beta=0.001$~\citep{kraus2015components}. This leads to a regularized expression for reconstructing each partially observed image based on its observed part, the sample mean and the estimated covariance. Figure~\ref{fig:map-reconstr} presents the ground displacement reconstructions obtained using three possible covariance estimation strategies, namely the {\em Low-rankMC}, the {\em Geostationary} and the {\em LapRegulMC} methods. Image reconstruction is done for three acquisitions, namely 20 November 2016, 28 September 2017, and 1 August 2018. The figure illustrates how the covariance estimator affects the quality of the reconstructed displacement maps when observations are incomplete.
The {\em Low-rankMC} yields reconstructions that fail to recover meaningful spatial patterns in large missing areas. This behavior stems from the structure of the estimated covariance, for which the cross-covariance between observed and unobserved regions is effectively zero. As a consequence, the best linear predictor cannot propagate localized deformation patterns from the observed pixels to the missing ones, and the reconstruction reverts to the mean deformation field, especially in large gaps where no direct observations are available.
The {\em Geostationary} estimator produces smooth reconstructions, but at the cost of introducing overly diffuse spatial patterns in the regions originally corresponding to large areas of missing information.
In contrast, the proposed {\em LapRegulMC} method yields reconstructions that exhibit a smooth transition between observed and missing regions, with spatial patterns that adapt flexibly to the structure of each individual temporal sample.

\begin{figure}[ht]
    \centering
    \includegraphics[width=1\linewidth]{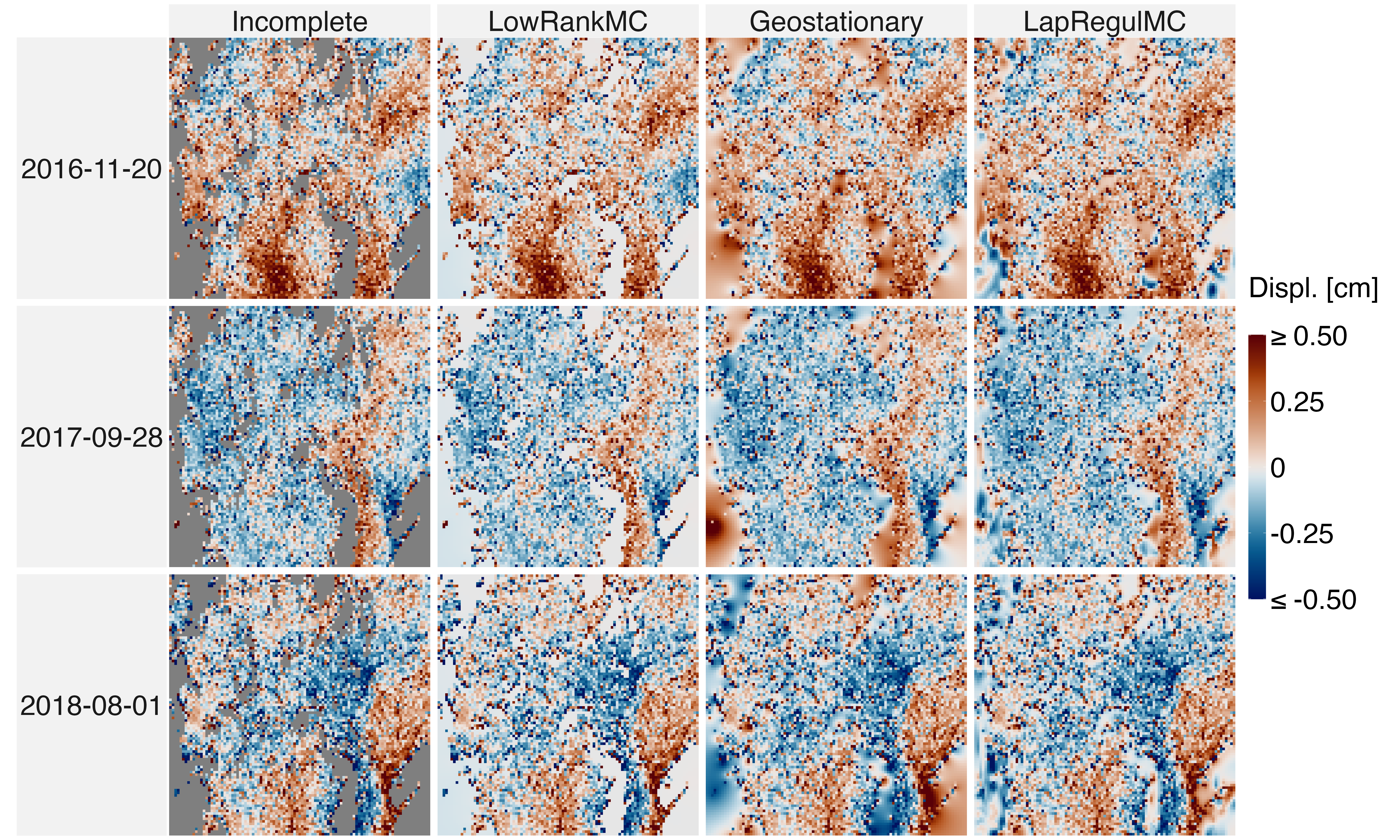}
    \caption{Incomplete and reconstructed ground displacement in images relative to an area of the Phlegraean Fields (Italy), on 20 November 2016, 28 September 2017, 1 August 2018. The reconstruction is done considering three possible estimation methods for the covariance: {\em Low-rankMC}, {\em Geostationary}, and {\em LapRegulMC}. A Ridge regularization with $\beta=0.001$ is performed in order to improve the numerical stability of the reconstruction.}
    \label{fig:map-reconstr}
\end{figure}

\paragraph{Scenario simulation.}
An accurate estimation of the covariance structure is central to scenario simulation, as it encodes the spatial dependencies that simulated fields must reproduce. Classical geostatistical simulation methods -- including exact approaches such as spectral simulation and LU decomposition, and approximate algorithms such as sequential Gaussian simulation and turning bands -- are all built around the availability of an accurate covariance, which guarantees that the generated realizations are statistically consistent with the target process \citep[e.g.,][]{davis1987production, dietrich1993fast, chiles2012geostatistics}. In applied settings, the credibility of scenario simulations -- and their value for uncertainty quantification and risk analysis -- hinges on how well the estimated covariance approximates the true one. This dependence is particularly evident in geophysical hazard modeling, where covariance‐based simulation has been used to merge InSAR and GNSS deformation rates~\citep{parizzi2020covariance}, or to generate scenarios of ground motion intensity measures ~\citep{abbasnejadfard2021investigating}.

Figure~\ref{fig:scenario_simulation} illustrates two simulated displacement scenarios generated from Gaussian processes whose covariance is estimated using the three competing approaches. The mean of the Gaussian process is the sample mean with smooth interpolations in the missing regions. Consistently with the behavior already observed in the reconstruction experiment, the simulations obtained with {\em Low-rankMC} exhibit poor spatial structure in the missing regions, reflecting the inability of the method to retrieve cross-dependence in the estimated covariance. The {\em Geostationary} estimator, on the other hand, produces overly smooth and spatially homogeneous fields, which tend to neglect localized deformation features. In contrast, simulations based on the proposed {\em LapRegulMC} covariance display richer and more coherent spatial variability, with deformation patterns that are spatially structured while remaining locally adaptive. This suggests that the regularized covariance estimate better captures the heterogeneous dependence structure of the displacement field, leading to more realistic simulated scenarios across the entire domain.

\begin{figure}[htbp]
    \centering
    \includegraphics[width=0.9\linewidth]{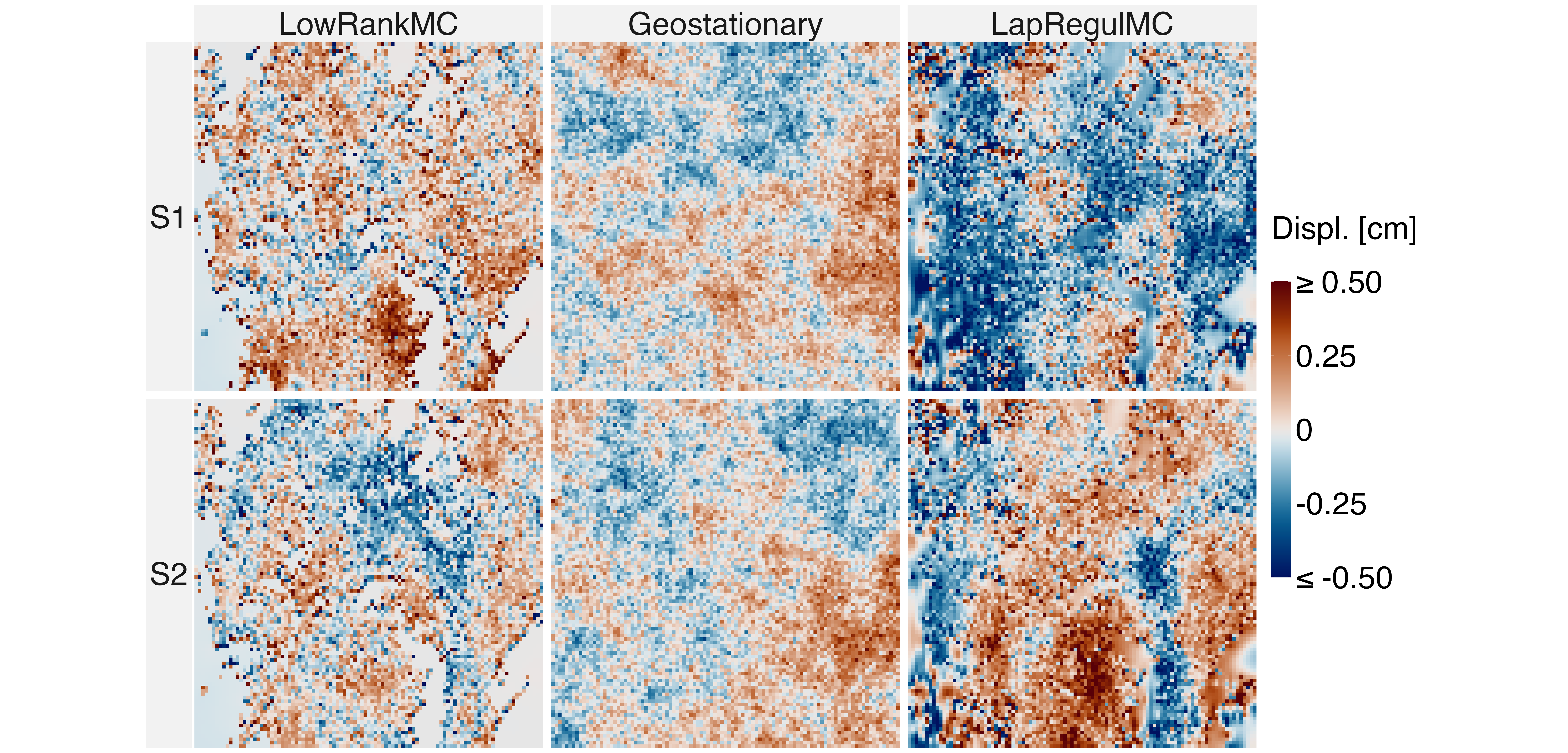}
    \caption{Simulated scenarios generated by sampling from a Gaussian process with second-order structure estimated using the {\em LapRegulMC} covariance estimation method.}
    \label{fig:scenario_simulation}
\end{figure}

\section{Discussion}
\label{sec:conclusions}
We have proposed a novel nonparametric approach to covariance estimation in the {\em fragmented regime}. By formulating the problem as matrix completion with Laplacian regularization, our method extends the covariance kernel smoothly into unobserved regions while preserving flexibility where data are available. The estimator is computationally tractable, it guarantees positive semidefiniteness, and it is straightforward to implement in practice. The proposed method is fully data-driven in that it identifies criteria for selecting the tuning parameters $\alpha$ and $m$ based on quantities that can be computed from the data. By leveraging the elbow points of the observable error in the covariance estimation as proxies for the minimizers of the total error, we obtain a practical parameter choice strategy that avoids computationally intensive cross-validation or resampling schemes.

Extensive simulations spanning stationary, nonstationary, isotropic, and anisotropic covariance structures demonstrate that {\em LapRegulMC} achieves consistently low estimation error. In contrast, low-rank matrix completion methods fail outside the serrated-domain regime, while the parametric geostatistical estimator deteriorates rapidly when its modeling assumptions are not met. These experiments highlight the flexibility and robustness of our approach across a broad spectrum of dependence structures.

We applied the methodology to ground displacement second-order analysis and reconstruction in the Phlegraean Fields. The estimated covariance exhibits smooth transitions from observed into missing regions, capturing spatial patterns that remain unrecovered by the alternative estimation methods considered in this work. Such estimation enables downstream tasks that depend on second-order structure, including ground displacement reconstruction in the missing regions and scenario simulation. These analyses further underscore the value and potential of the method for environmental hazard monitoring.

It should also be mentioned the recent work of~\citet{mbaka2025estimating}, which was developed concurrently with this work and also employs a Laplacian-type penalty for covariance estimation. While their framework shares a superficial similarity with ours, the focus and scope are different: their method targets measurement error correction and denoising, whereas our contribution addresses the challenge of reconstructing large unobserved regions in the fragmented regime. In this sense, the two approaches should be seen as complementary rather than overlapping.

While our work does not yet provide formal consistency guarantees, our simulation studies with increasing sample size suggest that the estimator behaves as expected and stabilizes toward the underlying covariance structure. Further theoretical investigation will be a natural next step, aimed at establishing identifiability conditions in the fragmented regime and at rigorously characterizing the asymptotic properties of the method.
In particular, if the unobserved regions were governed by processes with fundamentally different second-order structures than those observed at the boundaries, smooth extrapolation may not recover the true covariance.
Addressing these questions would complement the empirical evidence presented here and provide a deeper understanding of the statistical guarantees underlying our approach.

Future research may be devoted to addressing these limitations by studying identifiability conditions in the fragmented regime and developing theoretical results on the asymptotic behavior of the estimator. In scenarios where identifiability is not guaranteed, exogenous information may be incorporated in the estimation methodology to guide covariance estimation in the unobserved regions. One possibility in this direction would be to explore alternative forms of regularization which includes the additional information. This might help detect structural changes in the covariance kernel and reveal whether a different second-order process is active in the missing areas.



\bigskip

\noindent {\em Conflicts of interest:} None declared.

\section*{Acknowledgments}
The authors were partly supported by ACCORDO Attuativo ASI-POLIMI ``Attivit{\`a} di Ricerca e Innovazione" n.~2018-5-HH.0, collaboration agreement between the Italian Space Agency and Politecnico di Milano, and by the initiative ``Dipartimento di Eccellenza 2023–2027", MUR, Italy, Dipartimento di Matematica, Politecnico di Milano. They also acknowledge the financial support of IREA-CNR (Istituto per il Rilevamento Elettromagnetico dell'Ambiente del Consiglio Nazionale delle Ricerche) for funding a PhD grant in cooperation with Politecnico di Milano. GPT-5 was used for language improvement.

\section*{Data availability}
An open-source software implementation of the methods described in this paper, along  with the code needed to reproduce all numerical results from the simulations, are available online at \url{https://github.com/robertatroilo/Covariance_reconstruction.git}. The P-SBAS data that support the findings of this study are available from the corresponding author upon reasonable request.

\bibliographystyle{Chicago}

\bibliography{bibliography.bib}

\newpage

\appendix
\setcounter{theorem}{0}
\setcounter{equation}{0}
\setcounter{figure}{0}

\renewcommand{\thetheorem}{A\arabic{theorem}}
\renewcommand{\theproposition}{A\arabic{proposition}}
\renewcommand{\theequation}{A\arabic{equation}}
\renewcommand{\thefigure}{A\arabic{figure}}

\newcommand{\cA}{\mathcal{A}}
\newcommand{\cC}{\mathcal{C}}
\newcommand{\cD}{\mathcal{D}}
\newcommand{\cF}{\mathcal{F}}
\newcommand{\cI}{\mathcal{I}}
\newcommand{\hDelta}{\hat{\Delta}}

\section{Performance of state-of-the-art methods in the fragmented regime}
\label{appendix:state-of-the-art}

Figure~\ref{fig:exp2_alpha0.1_m1_K30_rank20_reduced} reports the reconstructions obtained with the methods of~\citet{descary2019recovering} and~\citet{delaigle2021estimating} when missing intervals are very narrow, corresponding to isolated missing locations in the discretized setting. The method of~\citet{descary2019recovering} assigns zero covariance to all pairs involving a missing location. The method of~\citet{delaigle2021estimating}, while not directly setting covariances to zero, already oversmooths them toward zero in the missing regions of the empirical kernel.
Figures~\ref{fig:exp3_alpha0.1_m1_K30_rank20_reduced} and~\ref{fig:exp4_alpha0.1_m1_K30_rank20_reduced} display the reconstructions provided by the two methods as the width of the missing intervals increases. The results clearly show that the tendency to set covariances to zero becomes increasingly severe for both methods as the fragmented regime worsens.

\begin{figure}[ht]
    \centering
    \includegraphics[width=1\linewidth]{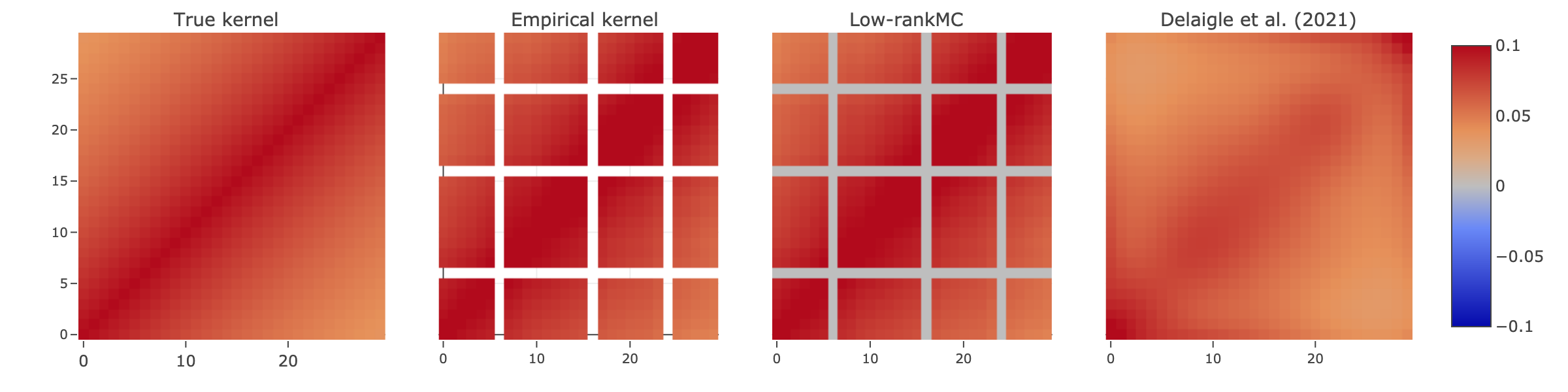}
    \caption{Example of covariance kernel reconstruction with state-of-the-art methods for partially observed functional data in the fragmented regime. The fragmented regime is simulated by selecting $J_M = \{7,17,25\}$ as the set of missing indexes along the grid (see Section~\ref{sec:data-sim}). Other details are as in Figure~\ref{fig:reconstruction-state-of-the-art}.}
    \label{fig:exp2_alpha0.1_m1_K30_rank20_reduced}
\end{figure}

\begin{figure}[ht]
    \centering
    \includegraphics[width=1\linewidth]{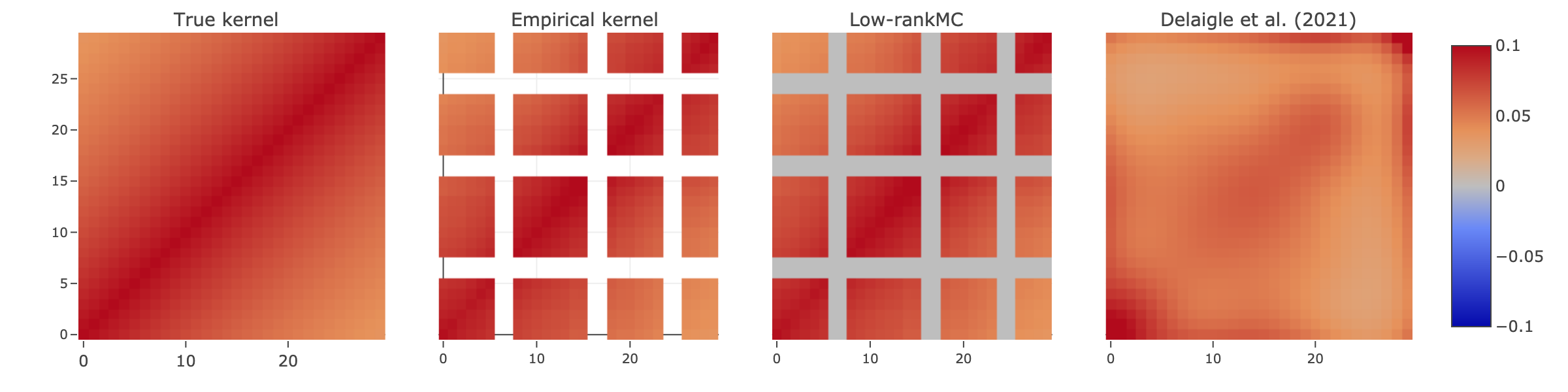}
    \caption{Example of covariance kernel reconstruction with state-of-the-art methods for partially observed functional data in the fragmented regime. The fragmented regime is simulated by selecting $J_M = \{7:8,17:18,25:26\}$ as the set of missing indexes along the grid (see Section~\ref{sec:data-sim}). Other details are as in Figure~\ref{fig:reconstruction-state-of-the-art}.}
    \label{fig:exp3_alpha0.1_m1_K30_rank20_reduced}
\end{figure}

\begin{figure}[ht]
    \centering
    \includegraphics[width=1\linewidth]{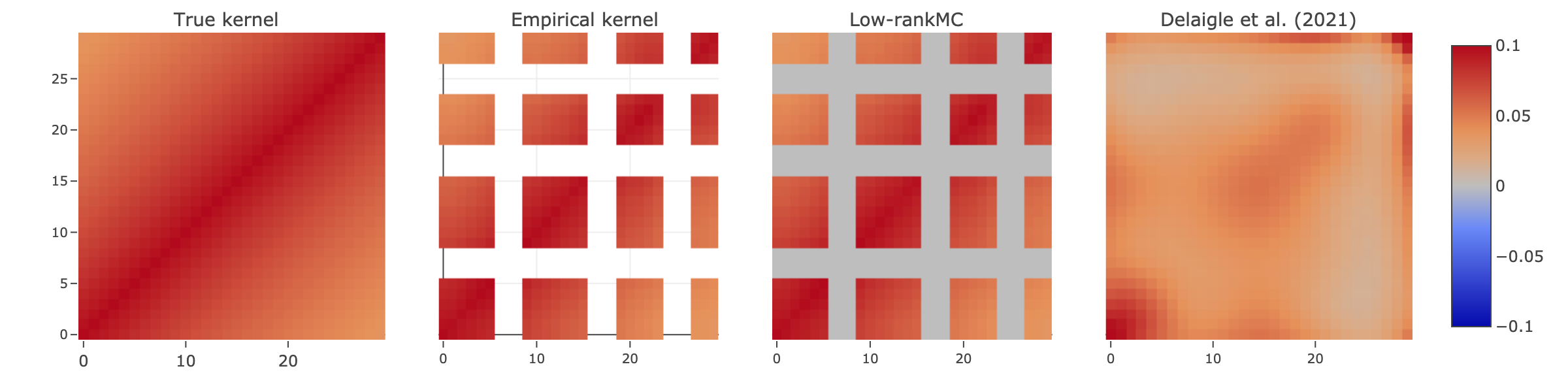}
    \caption{Example of covariance kernel reconstruction with state-of-the-art methods for partially observed functional data in the fragmented regime. The fragmented regime is simulated by selecting $J_M = \{7:9,17:19,25:27\}$ as the set of missing indexes along the grid (see Section~\ref{sec:data-sim}). Other details are as in Figure~\ref{fig:reconstruction-state-of-the-art}.}
    \label{fig:exp4_alpha0.1_m1_K30_rank20_reduced}
\end{figure}


\section{Mathematical proof}
\label{appendix:proof}

We first state and prove two simpler results.

\begin{proposition}
\label{prop:wellposed-simplified}
The optimization problem
\begin{equation}
    \underset{\theta \in \mathbb{S}_{+}^K}{\text{min}} \left\{ 
        \left\| P^O \circ (\hat{R}^K - \theta) \right\|_F^2 
        + \alpha
            \left\| \left( L_1\theta + \theta L_1 \right) \right\|_F^2
    \right\},
    \label{eq:min-pb-theta-simplified}
\end{equation}
admits a unique minimizer $\theta^\star\in\mathbb{S}_+^K$.
\end{proposition}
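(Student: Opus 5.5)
Existence and uniqueness will follow from strict convexity plus coercivity of the objective on the closed convex cone $\mathbb{S}_+^K$. Write the objective in~\eqref{eq:min-pb-theta-simplified} as
\[
F(\theta)=\|P^O\circ(\hat R^K-\theta)\|_F^2+\alpha\|\mathcal{L}(\theta)\|_F^2,\qquad \mathcal{L}(\theta):=L_1\theta+\theta L_1 .
\]
This is a convex quadratic function on the finite-dimensional space of symmetric matrices. The set $\mathbb{S}_+^K$ is nonempty, closed and convex. So it suffices to show that the quadratic form
\[
Q(\theta)=\|P^O\circ\theta\|_F^2+\alpha\|\mathcal{L}(\theta)\|_F^2
\]
is positive definite on symmetric matrices, or at least on differences of feasible points. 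Positive definiteness gives strict convexity and coercivity $F(\theta)\ge c\|\theta\|_F^2-C$. Existence then follows by restricting to a compact sublevel set intersected with $\mathbb{S}_+^K$ and applying Weierstrass. Uniqueness follows from strict convexity on a convex set: if two minimizers existed, their midpoint would be feasible and strictly better.

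The key step is identifying $\ker Q=\ker(P^O\circ\cdot)\cap\ker\mathcal{L}$. Diagonalize $L_1=V\Lambda V^\top$ with orthonormal $V$. Since $L_1$ is the Neumann path Laplacian, its eigenvalues are $0=\lambda_1<\lambda_2\le\dots\le\lambda_K$, and the null space is spanned by $\mathbf{1}/\sqrt K$; the zero eigenvalue is simple because the path graph is connected.

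Set $\tilde\theta=V^\top\theta V$. Then $\mathcal{L}(\theta)=V\big[(\lambda_i+\lambda_j)\tilde\theta_{ij}\big]V^\top$. Hence $\mathcal{L}(\theta)=0$ iff $\tilde\theta_{ij}=0$ whenever $\lambda_i+\lambda_j>0$, that is, for all $(i,j)\neq(1,1)$. So $\ker\mathcal{L}=\operatorname{span}\{\mathbf{1}\mathbf{1}^\top\}=\operatorname{span}\{J\}$. This is the Sylvester-operator fact that the spectrum of $\mathcal{L}$ is $\{\lambda_i+\lambda_j\}$.

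Next, $P^O\circ(cJ)=cP^O$, which is nonzero for $c\ne0$ as long as $\Omega_O\ne\emptyset$, i.e.\ at least one grid point is observed. I will state this mild nondegeneracy assumption explicitly, since the fragmented regime presupposes $\mathcal{J}_O\neq\emptyset$. It follows that $\ker Q=\{0\}$, so $Q(\theta)\ge c\|\theta\|_F^2$ with $c>0$ the smallest eigenvalue of $Q$ on the compact unit sphere.

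I expect the main obstacle to be this kernel computation. The rest is standard convex analysis. The only care needed there is to note that minimizing over $\mathbb{S}_+^K$ is minimizing a strictly convex, coercive, continuous function over a closed convex set, and that the constant-matrix direction is exactly what the data term must pin down. This same structure should then extend to~\eqref{eq:min-pb-theta} by replacing $\mathcal{L}$ with the masked operators and redoing the kernel analysis.
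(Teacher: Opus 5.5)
Your proof is correct and follows essentially the same route as the paper. Both write the objective as a quadratic with Hessian $2(D+\alpha M^\top M)$. Both show it is positive definite because the Laplacian term's kernel is spanned by the all-ones matrix, via the $\{\lambda_i+\lambda_j\}$ spectrum of the Kronecker sum (your $V^\top\theta V$ Sylvester form is just the unvectorized version), and $P^O$ does not annihilate that matrix. Both then conclude by uniqueness of minimizers of a strictly convex, coercive quadratic on the closed convex cone $\mathbb{S}_+^K$. Your explicit assumption $\Omega_O\neq\emptyset$ states what the paper uses only implicitly.
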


\begin{proof}[Proof of Proposition~\ref{prop:wellposed-simplified}]
The proof relies on the well-known result that, if a set $C\subset\mathbb{R}^n$ is nonempty, closed, and convex, and
\begin{align*}
    f(x)=\frac{1}{2} x^T Qx + b^T x + c,
\end{align*}
with $Q$ symmetric positive definite, $b\in \mathbb{R}^n$ and $c \in \mathbb{R}$, then $f$ admits a unique minimizer on $C$ (e.g., \citealt{boyd2004convex})

Since the space $\mathbb{S}^+_K$ is nonempty, closed and convex, the proof is completed by expressing the objective function~\eqref{eq:min-pb-theta} as a quadratic functional and by showing that $Q$ is positive definite.

\noindent \textit{Notation.} In the following, we denote with $\mathrm{diag}$ the operation of extracting the sole diagonal of a square matrix, and with $\mathrm{Diag}$ the operation of diagonalizing a vector. That is, if $x \in \mathbb{R}^d$, then $\mathrm{Diag}(x)$ is a $d$-dimensional diagonal matrix and the following equality holds: $\mathrm{diag}(\mathrm{Diag}(x)) = x$.

\textit{Expressing the objective as a quadratic functional.}
Let $x=\mathrm{vec}(\theta)\in\mathbb{R}^{K^2}$ and $y=\mathrm{vec}(\hat R^K)$.
Let $D=\mathrm{Diag}(\mathrm{vec}(P^O))$ and let
\begin{align*}
    M \;=\; L_1\otimes I \;+\; I \otimes L_1.
\end{align*}

Using $\mathrm{vec}(P^O \circ (\hat R^K-\theta))= D \; \mathrm{vec}(\hat R^K-\theta)$ and 
$\mathrm{vec}(L_1\theta+\theta L_1)= M\,\mathrm{vec}(\theta)$, 
the objective reads
\begin{align*}
    f(x) = \|P^O\circ(\hat R^K-\theta)\|_F^2 + \alpha\|L_1\theta+\theta L_1\|_F^2
    = \|D(y-x)\|_2^2 \;+\; \alpha\|Mx\|_2^2 ,
\end{align*}
where $\| \cdot \|_2$ denotes the Eucledian norm.
Then,
\begin{align*}
    f(x) &= (y-x)^\top D (y-x) + \alpha x^\top M^\top M x \\
    &= x^\top (D+\alpha M^\top M) x \;-\; 2 y^\top D x \;+\; y^\top D y,
\end{align*}
where we used $D^2=D$ since $D$ is a diagonal matrix with values 0 and 1.

Hence,
\begin{align*}
    f(x) \;=\; \tfrac12 x^\top Q x \;+\; b^\top x \;+\; c,
\end{align*}
with
\begin{align*}
    Q \;:=\; 2\big(D+\alpha M^\top M\big),\quad b \;:=\; -2 D y , \quad c:=y^\top D y.
\end{align*}
Matrix $Q$ is clearly symmetric (recall that $D$ is diagonal).
Thus showing $Q$ is positive definite will imply that $f$ is strictly convex and coercive.

\textit{Positive definiteness of $Q$.}
For any $x$,
\begin{align*}
    x^\top Q x \;=\; 2\|D x\|_2^2 \;+\; 2 \alpha\|Mx\|_2^2.
\end{align*}
Thus $Q$ is positive definite iff $\ker(D) \cap \ker(M)=\{0\}$. We characterize the two kernels separately, and show that their intersection is empty.

\begin{enumerate}
    \item Since $D$ is a diagonal $0$–$1$ projector onto the observed entries, its nullspace is
    \begin{align*}
        \ker(D)=\{x\in\mathbb{R}^{K^2}:\ x_{pq}=0 \text{ for every observed pair }(p,q)\in\Omega_O\}.
    \end{align*}
    Equivalently, $\ker(D)$ are exactly the matrices that vanish on all observed pairs in the set $\Omega_O$.

    \item It is known that the discrete Neumann Laplacian $L_1$ is symmetric positive semidefinite with
    \begin{align*}
        \ker(L_1)=\mathrm{span}\{u_1\},\quad u_1\propto\mathbf{1},
    \end{align*}
    and admits an orthonormal eigenbasis $L_1=U\Lambda U^\top$ with
    $0=\lambda_1<\lambda_2\le\cdots\le\lambda_K$.
    For the Kronecker \emph{sum}
    \begin{align*}
        M \;=\; L_1\otimes I \;+\; I\otimes L_1,
    \end{align*}
    the spectral property of Kronecker sums gives eigenpairs
    \begin{align*}
        \{\,(\lambda_i+\lambda_j,\ u_i\otimes u_j)\,\}_{i,j=1}^K.
    \end{align*}
    Hence
    \begin{align*}
        \ker(M)=\mathrm{span}\{u_1\otimes u_1\} =\mathrm{span}\{\mathrm{vec}(u_1u_1^\top)\} =\mathrm{span}\{\mathrm{vec}(\mathbf{1}\mathbf{1}^\top)\},
    \end{align*}
    i.e., the only matrices unpenalized by the smoothness term are constant rank-one matrices.
\end{enumerate}

Since the vectorization of any nonzero constant matrix does not belong to $\ker{D}$, it follows that $\ker(D)\cap\ker(M)=\{0\}$ and $Q$ is positive definite.
\end{proof}

\begin{proposition}
\label{prop:wellposed-masked}
The optimization problem
\begin{equation}
    \underset{\theta \in \mathbb{S}_{+}^K}{\text{min}} \left\{ 
        \left\| P^O \circ (\hat{R}^K - \theta) \right\|_F^2 
        + \alpha
            \left\| \bar{I} \circ \left( L_1\theta + \theta L_1 \right) \right\|_F^2
        + \alpha
            \left\| I \circ \left( L_1 (I \circ \theta) J  \right) \right\|_F^2
    \right\},
    \label{eq:min-pb-theta-masked}
\end{equation}
admits a unique minimizer $\theta^\star\in\mathbb{S}_+^K$.
\end{proposition}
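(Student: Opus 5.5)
The plan is to reuse the strategy of Proposition~\ref{prop:wellposed-simplified}. We rewrite the objective~\eqref{eq:min-pb-theta-masked} as a quadratic form $f(x)=\tfrac12 x^\top Q x+b^\top x+c$ in $x=\mathrm{vec}(\theta)$, show that $Q$ is positive definite on the relevant space, and then invoke the existence and uniqueness of a minimizer of a strictly convex coercive quadratic over the nonempty, closed, convex set $\mathbb{S}_+^K$.

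Both penalty terms are linear in $\theta$ followed by a $0$--$1$ mask, so they vectorize cleanly.
\begin{itemize}
\item For the first penalty, let $\bar D=\mathrm{Diag}(\mathrm{vec}(\bar I))$ and $M=L_1\otimes I+I\otimes L_1$. Then $\mathrm{vec}(\bar I\circ(L_1\theta+\theta L_1))=\bar D M x$.
\item For the second penalty, observe that $(I\circ\theta)J$ has $i$-th row identically equal to $\theta_{ii}$. Hence $L_1(I\circ\theta)J$ has $i$-th row identically equal to $[L_1 d]_i$, where $d=\mathrm{diag}(\theta)$. After masking with $I$, this penalty equals $\|L_1 d\|_2^2$, and $d=Sx$ for the selection matrix $S$ extracting diagonal entries.
\end{itemize}
This gives $Q=2\big(D+\alpha M^\top\bar D M+\alpha S^\top L_1^2 S\big)$, which is symmetric and positive semidefinite. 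Positive definiteness is then equivalent to
\[
\ker(D)\cap\ker(\bar D M)\cap\ker(L_1S)=\{0\}.
\]
Since the feasible set lies in the symmetric matrices, it suffices to prove this on the subspace $\mathbb{S}^K$: strict convexity of $f$ restricted to that subspace already gives uniqueness.

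Take $\theta$ symmetric in the triple intersection. We argue in two steps.
\begin{enumerate}
\item \emph{Diagonal.} From $L_1 d=0$ and $\ker(L_1)=\mathrm{span}\{\mathbf 1\}$, the diagonal $d$ is constant. Because some diagonal entry $(j,j)$ with $j\in\mathcal J_O$ is observed, the condition $\theta\in\ker(D)$ forces $d=0$.
\item \emph{Off-diagonal.} We need $\theta_{ij}=0$ for $i\neq j$. We know $\theta$ vanishes on $\Omega_O$ and satisfies the five-point equation $[L_1\theta+\theta L_1]_{ij}=0$ at every entry with $|i-j|\ge2$. We would treat this as a discrete Laplace problem on the region $\{|i-j|\ge2\}$, with the tridiagonal band acting as boundary data. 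We would first sweep along anti-diagonals starting from the corner $(1,K)$, where the Neumann adaptation leaves few neighbours. Each equation there should express one unknown entry in terms of already-determined zeros, so that $\theta$ is forced to vanish entry by entry. As an alternative, we would use a discrete maximum principle on each connected missing block whose boundary meets observed entries.
\end{enumerate}

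The main obstacle is the first super- and sub-diagonal. Those entries carry no penalty of their own, and when an index in $\mathcal J_M$ is adjacent to another missing index they are not observed either. They then enter the equations only as boundary values, so the count of equations versus unknowns may fail. For instance, with $K=3$ and $\mathcal J_O=\{1\}$, the entries $\theta_{12}$ and $\theta_{23}$ appear only in the single equation at $(1,3)$, so a nonzero kernel element exists. I would therefore first check carefully which missing patterns make the anti-diagonal recursion close. If it does not close in general, I would state the proposition under an explicit condition ensuring that every band entry is determined — for example, that the band entries are observed, or that each missing run is flanked by observed indices and $K$ is large enough — and verify that $\ker(D)\cap\ker(\bar D M)\cap\ker(L_1 S)=\{0\}$ under that condition. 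The remaining steps, symmetry of $Q$, coercivity and the appeal to the convex-quadratic uniqueness result, are routine.
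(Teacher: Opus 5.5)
Your reduction is the right one, and the obstacle you flag is precisely what the paper's proof skips. The paper never uses the off-diagonal penalty $A_1$. It only shows $\mathbb{S}_+^K\cap\ker D\cap\ker A_2=\{0\}$: the diagonal is constant, an observed diagonal entry forces it to vanish, and a PSD matrix with zero diagonal is zero. From this it declares $Q$ positive definite, and that inference is invalid. Because $\mathbb{S}_+^K$ is the recession cone of the feasible set, triviality of $\ker Q$ on it gives coercivity and hence existence of a minimizer. Uniqueness, however, requires strict convexity along $\theta_1-\theta_2$ for feasible $\theta_1,\theta_2$, and such differences range over all of $\mathbb{S}^K$. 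So one really needs $\mathbb{S}^K\cap\ker D\cap\ker A_1\cap\ker A_2=\{0\}$, exactly as you say, and this is where the unpenalized band entries matter.

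Your $K=3$, $\mathcal{J}_O=\{1\}$ example is correct, and it refutes the proposition itself, not just the strategy. The objective $f$ is constant along any symmetric $H$ in that kernel, since $QH=0$ and $b^\top\mathrm{vec}(H)=-2y^\top D\,\mathrm{vec}(H)=0$. Concretely, with $\hat R^K_{11}=1$ both $I$ and $J$ attain objective $0$, because $L_1J=JL_1=0$, $(L_1)_{13}=0$ and $L_1\mathbf{1}=0$.

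This is not an artifact of a single observed index. Write $n_O=|\mathcal{J}_O|$. There are $\binom{K}{2}-\binom{n_O}{2}$ unknown upper off-diagonal entries and only $\binom{K-1}{2}$ equations, so the kernel is nontrivial whenever $\binom{n_O}{2}<K-1$. For $K=5$, $\mathcal{J}_O=\{1,2,5\}$, the kernel is spanned by the symmetric $H$ with $H_{13}=1$, $H_{23}=3$, $H_{24}=-1$, $H_{34}=-7$, $H_{35}=1$, $H_{45}=10$, and all other entries on or above the diagonal zero. If $\hat R^K$ agrees on $\Omega_O$ with the positive definite, penalty-free matrix $I+J$, then every $I+J+tH$ with small $|t|$ is a minimizer.

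So the proposition is false as stated. The remaining gap in your proposal is that it stops at this diagnosis without a proved replacement, and the hypotheses you float do not suffice. Observing the band forces $\mathcal{J}_M=\emptyset$. In the $K=5$ example the missing run $\{3,4\}$ is already flanked by observed indices, so flanking alone is not enough.

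The correct replacement is the following. The minimizer is unique for every $\hat R^K$ if and only if the linear map $\theta\mapsto\bigl(P^O\circ\theta,\ \bar I\circ(L_1\theta+\theta L_1),\ L_1\,\mathrm{diag}(\theta)\bigr)$ is injective on $\mathbb{S}^K$. Sufficiency follows from your strict-convexity argument, and necessity from the $I+J+tH$ construction. This is a rank condition on the pattern $(K,\mathcal{J}_O)$. Your anti-diagonal sweep is a reasonable tool for verifying it on a specified class of patterns, but as written that step is only a plan.
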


\begin{proof}[Proof of Proposition~\ref{prop:wellposed-masked}]
We refer to the notation introduced in the proof of Proposition~\ref{prop:wellposed-simplified}.

\textit{Expressing the objective as a quadratic functional.}
Let $D_I = \mathrm{Diag}(\mathrm{vec}(I))$ and $D_{\bar{I}} = \mathrm{Diag}(\mathrm{vec}(\bar{I}))$. Define
\begin{align*}
    A_1 := D_{\bar{I}} \; M,
    \qquad
    A_2 := D_I (J \otimes L_1) D_I.
\end{align*}
Then the objective equals
\begin{align*}
    f(x)=\|D(y-x)\|_2^2+\alpha\|A_1 x\|_2^2+\alpha\|A_2 x\|_2^2.
\end{align*}
In vectorized form this is a quadratic $f(x)=\tfrac12 x^\top Qx+b^\top x+c$ with
\begin{align*}
    Q=2\big(D+\alpha A_1^\top A_1+\alpha A_2^\top A_2\big), \quad b \;:=\; -2 D y , \quad c:=y^\top D y.
\end{align*}
We now show that $Q$ is positive definite.

\textit{Positive definiteness of $Q$.}
It suffices to show that, within the PSD cone, the only matrix lying in both kernels of $D$ and $A_2$ is the zero matrix:
\begin{align*}
    \mathbb{S}_+^K \cap \ker D \cap \ker A_2 \;=\; \{0\}.
\end{align*}
Then the same holds when intersecting also with $\ker A_1$, since
\begin{align*}
    \mathbb{S}_+^K \cap \ker D \cap \ker A_2 \cap \ker A_1
\;\subseteq\;
\mathbb{S}_+^K \cap \ker D \cap \ker A_2
\;=\; \{0\}.
\end{align*}

\begin{enumerate}
    \item As in the proof of Proposition~\ref{prop:wellposed-simplified}, $\mathrm{ker}(D)$ are the matrices that vanish on all observed pairs in the set $\Omega_O$.
    \item Writing $d_{\theta}=\mathrm{diag}(\theta)\in\mathbb{R}^K$, note that
        \begin{align*}
            (I\circ\theta)=\mathrm{Diag}(d_{\theta}),\quad L_1 (I\circ\theta) J  = \big(L_1 d_{\theta}\big)\,\mathbf{1}^\top, \quad I\circ\big(L_1(I\circ\theta)J\big)=\mathrm{Diag}(L_1 d_{\theta}).
        \end{align*}
        Hence
        \begin{align*}
            \|A_2(\theta)\|_F^2=\|L_1 d_{\theta}\|_2^2, \qquad
            \ker(A_2)=\{\theta:\ \mathrm{diag}(\theta)\in\ker(L_1)\}
            = \{\theta:\ \mathrm{diag}(\theta)=c\,\mathbf{1}\},
        \end{align*}
        where we used again the fact $\ker(L_1)=\mathrm{span}\{\mathbf{1}\}$ for the discrete Neumann Laplacian.
\end{enumerate}

Following these observations, we are able to make the following argument.
Assume $\theta\in\mathbb{S}_+^K$ satisfies simultaneously
\begin{align*}
    P^O\circ \theta=0, \qquad A_2(\theta)=0.
\end{align*}
From $A_2(\theta)=0$ we get $\mathrm{diag}(\theta)=c\,\mathbf{1}$ for some $c\in\mathbb{R}$. Since at least one diagonal entry is observed (we have at least one systematically observed point in our setting, hence at least one entry in the diagonal of the empirical covariance matrix), the condition $P^O\circ\theta=0$ forces $c=0$.

It is a known result that a positive semidefinite matrix has a zero entry on its main diagonal if and only if the entire row and column to which that entry belongs is zero (see ~\citealp{horn2012matrix}, p. 434). Therefore, $\mathrm{diag}(\theta)=0$ implies $\theta=0$, and we can conclude that
\begin{align*}
    \mathbb{S}_+^K\cap\ker D\cap\ker A_2 = \mathbb{S}_+^K\cap\ker D\cap\ker A_2 \cap\ker A_1=\{0\}.
\end{align*}

\end{proof}

We are now ready to prove Proposition~\ref{prop:wellposed}.

\begin{proof}[Proof of Proposition~\ref{prop:wellposed}.]
We build on the same rationale followed in the proofs of Propositions~\ref{prop:wellposed-simplified} and ~\ref{prop:wellposed-masked}.

\textit{Expressing the objective as a quadratic functional.}
Recall that, from the proof of Proposition~\ref{prop:wellposed-masked},
\begin{align*}
    A_1 := D_{\bar{I}} \; M,
    \qquad
    A_2 := D_I (J \otimes L_1) D_I.
\end{align*}
For the definition of $M$ and $D_I$, we refer to Propositions~\ref{prop:wellposed-simplified} and ~\ref{prop:wellposed-masked}.

Let $D_{P^m} = \mathrm{Diag}(\mathrm{vec}(P^m))$. Then, define
\begin{align*}
    A'_1 := D_{P^m} \, A_1,
    \qquad
    A'_2 := D_{P^m} \, A_2.
\end{align*}
Then, it is possible to express the objective~\eqref{eq:min-pb-theta} as a quadratic function
\begin{align*}
    f(x)=\|D(y-x)\|_2^2+\alpha\|A'_1 x\|_2^2+\alpha\|A'_2 x\|_2^2.
\end{align*}
In vectorized form this is a quadratic $f(x)=\tfrac12 x^\top Qx+b^\top x+c$ with
\begin{align*}
    Q=2\big(D+ \alpha \, A_1'^\top A_1' + \alpha \, A_2'^\top A_2' \big), \quad b \;:=\; -2 D y , \quad c:=y^\top D y.
\end{align*}
We now show that $Q$ is positive definite.

\textit{Positive definiteness of $Q$.}
As in the proof of Proposition~\ref{prop:wellposed-masked}, it is sufficient to show that the intersection of the kernel of $D$ and the kernel of $A'_2$ is empty.

As in the proof of Proposition~\ref{prop:wellposed-simplified}, $\mathrm{ker}(D)$ are the matrices that vanish on all observed pairs in the set $\Omega_O$.

We now turn to the kernel of $A_2'$.
Let $d_{P^m} = \mathrm{diag}(P^m)$. Following similar steps as in Proposition~\ref{prop:wellposed-masked} one can show that
\begin{align*}
    \lVert P^m \circ I \circ (L_1 (I \circ \theta) J)\rVert_F^2 = \lVert d_{P^m} \circ L_1 d_{\theta} \rVert_2^2,
\end{align*}
where $d_{\theta} = \mathrm{diag} (\theta)$.

If $m \in (0,1)$, then the proof follows exactly the same argument as in the proof of Proposition~\ref{prop:wellposed-masked}.
If $m=1$, then a matrix $\theta$ belongs to $\mathrm{ker} \, (A_2')$ if the missing entries in its main diagonal, i.e. $i$ such that $d_{P^m,i} =[P^m]_{ii} = 1$, are constant and anchor with continuity to the observed entries in the main diagonal, i.e. $i'$ such that $d_{P^m,i'} = [P^m]_{i'i'} = 0$. Using again the argument in~\citet{horn2012matrix}, p. 434, this corresponds to a non-null positive semidefinite matrix $\theta$ only if $d_{\theta,i'} \neq 0$. Having $d_{\theta,i'} \neq 0$, with $(i',i')$ being an observed entry in $\theta$, implies that $\theta \notin \mathrm{ker} \, D$. This concludes the proof.

\end{proof}

\section{Simulation study}
\label{appendix:simulation-study}

\subsection{Comparison with existing methods}
\label{appendix:comparison-with-existing-methods}

\begin{figure}[ht]
    \centering
    \includegraphics[width=0.8\linewidth]{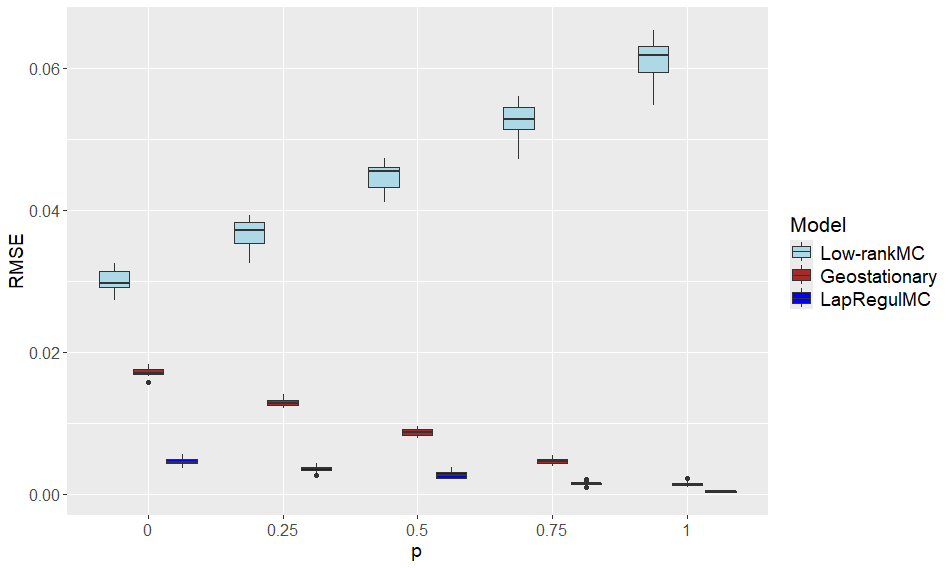}
    \caption{Comparison of the performance in terms of $\text{RMSE}_{\mathrm{tot}}(\hat{R}^K, \hat{\theta})$ for the three models for covariance reconstruction: the method proposed in \cite{descary2019recovering} (Low-rankMC), a parametric geostationary method (Geostationary) and our method (LapRegulMC).}
    \label{fig:methods-comparison-emp}
\end{figure}

\subsection{The impact of relying on finite sample size}
\label{appendix:increasing_nsamples}
In this section, we aim to assess the impact of selecting the hyperparameters based on the empirical covariance matrix, rather than the true covariance, on the performance of the proposed estimation routine.

We consider three distinct estimation errors: (i) {\em Empirical–empirical error (EE)} denotes the total error $\mathrm{RMSE}_{\mathrm{tot}}$ computed with respect to the empirical covariance matrix, when the hyperparameters are selected to minimize that same empirical error; (ii) {\em True–true error (TT)} denotes the total error $\mathrm{RMSE}_{\mathrm{tot}}$ computed with respect to the true covariance matrix, when the hyperparameters are selected to minimize the error with respect to the true covariance; (iii) {\em True–empirical error (TE)} denotes the total error $\mathrm{RMSE}_{\mathrm{tot}}$ computed with respect to the true covariance matrix, when the hyperparameters are selected as the minimizers of the empirical error.

Figure~\ref{fig:increasing_nsamples} displays boxplots of three errors obtained over 30 independent experiments, for increasing sample sizes $n \in \{10^2, 10^3, 10^4, 10^5\}$, under the stationary (left) and the nonstationary (right) covariance structures introduced in Section~\ref{sec:data-sim}. The pattern of partial observation is sampled randomly at each replication of the experiment.
Across both scenarios, the distributions of the errors become increasingly similar as $n$ grows. In particular, the {\em ET} and {\em TT} errors are similar across all $n$, indicating that selecting the hyperparameters by minimizing the empirical error yields performance comparable to the (infeasible) choice based on the true covariance.

\begin{figure}[ht]
    \centering
    \subfloat{%
        \includegraphics[width=0.49\linewidth]{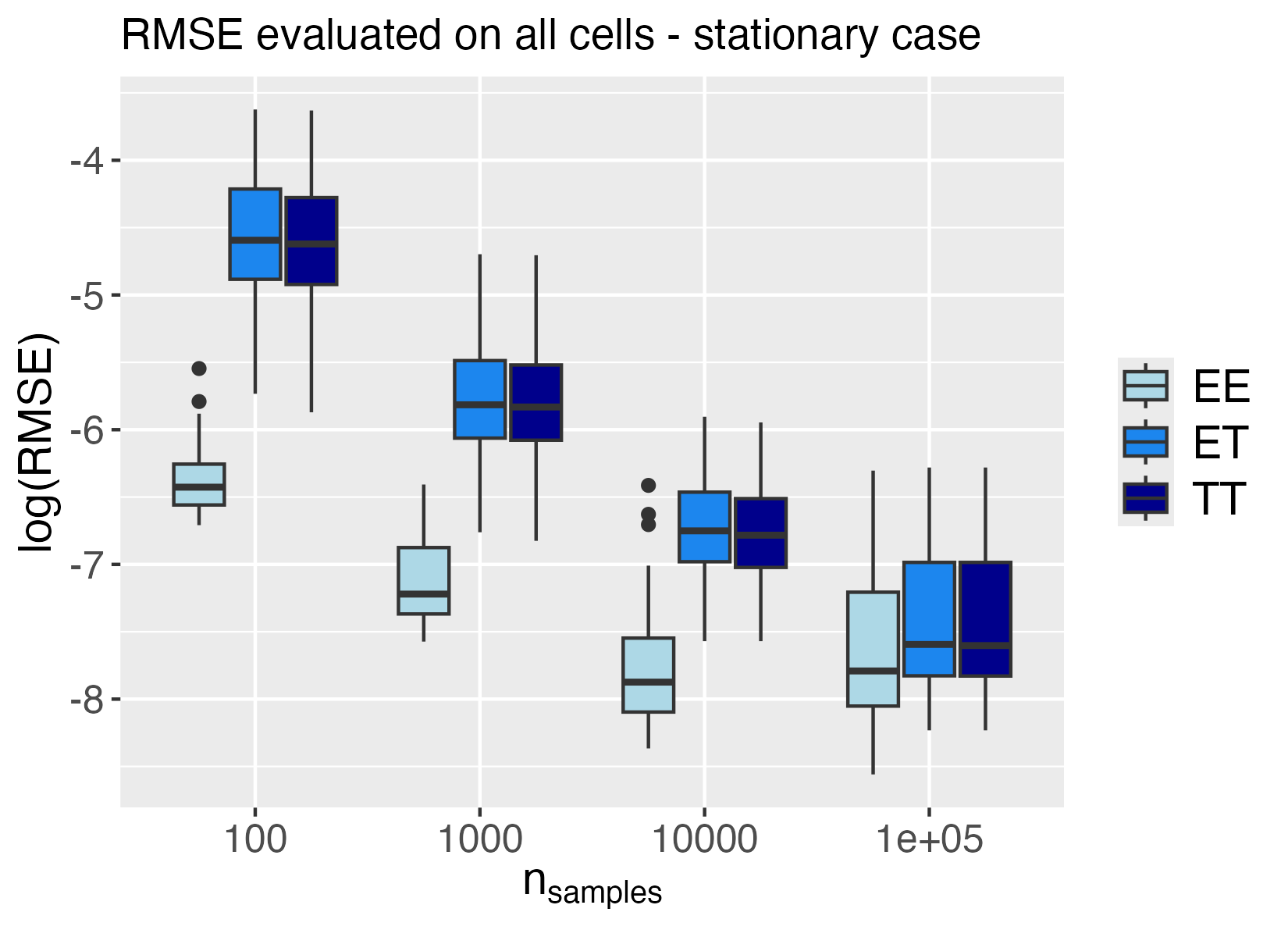}
    }
    \hfill
    \subfloat{%
        \includegraphics[width=0.49\linewidth]{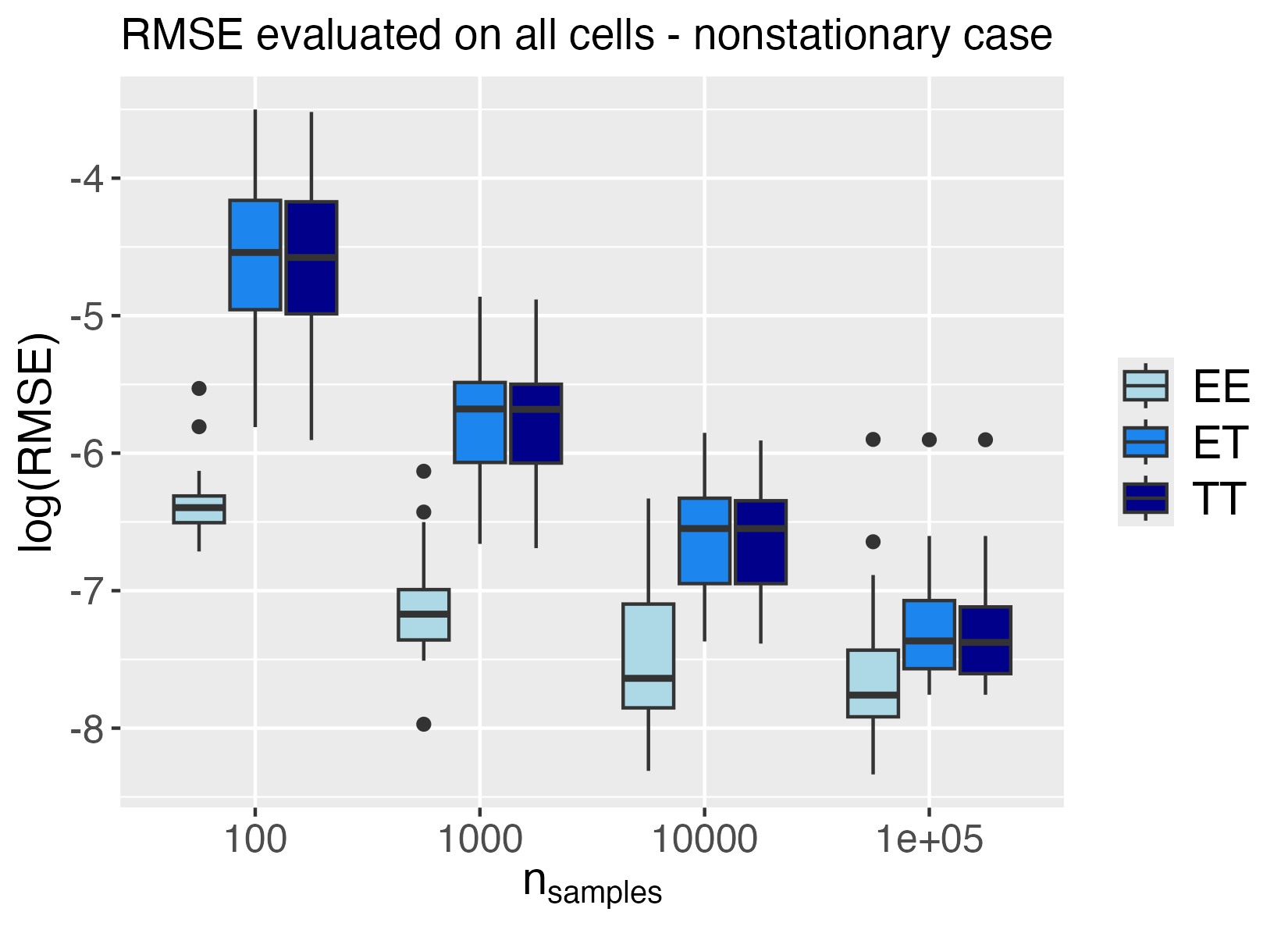}
    }
    \caption{Boxplots of three estimation errors over 30 independent experiments for increasing sample sizes. The considered covariance structures are the Matérn stationary covariance (left) and the nonstationary covariance (right) introduced in Section~\ref{sec:data-sim}. {\em EE} denotes the estimation error with respect to the empirical covariance using empirically optimized hyperparameters. {\em TT} denotes the estimation error with respect to the true covariance using hyperparameters optimized for the true covariance. {\em TE} denotes the estimation error with respect to the true covariance when hyperparameters are selected by minimizing the empirical error.}
    \label{fig:increasing_nsamples}
\end{figure}

To complement this analysis, Figure~\ref{fig:increasing_nsamples_optimal_stat} displays the distributions of the optimal $\alpha$ (left) and $m$ (right) obtained by minimizing the empirical error ({\em E}) and the true error ({\em T}) for the Matérn covariance model. The distributions of the two errors become increasingly similar as the sample size grows, showing that the hyperparameters identified by tuning using the error with respect to the empirical covariance converge to those obtained using the error with respect to the true covariance. An analogous behavior is observed for the nonstationary covariance model, as shown in Figure~\ref{fig:increasing_nsamples_optimal_nonstat}.

\begin{figure}[ht]
    \centering
    \subfloat{%
        \includegraphics[width=0.49\linewidth]{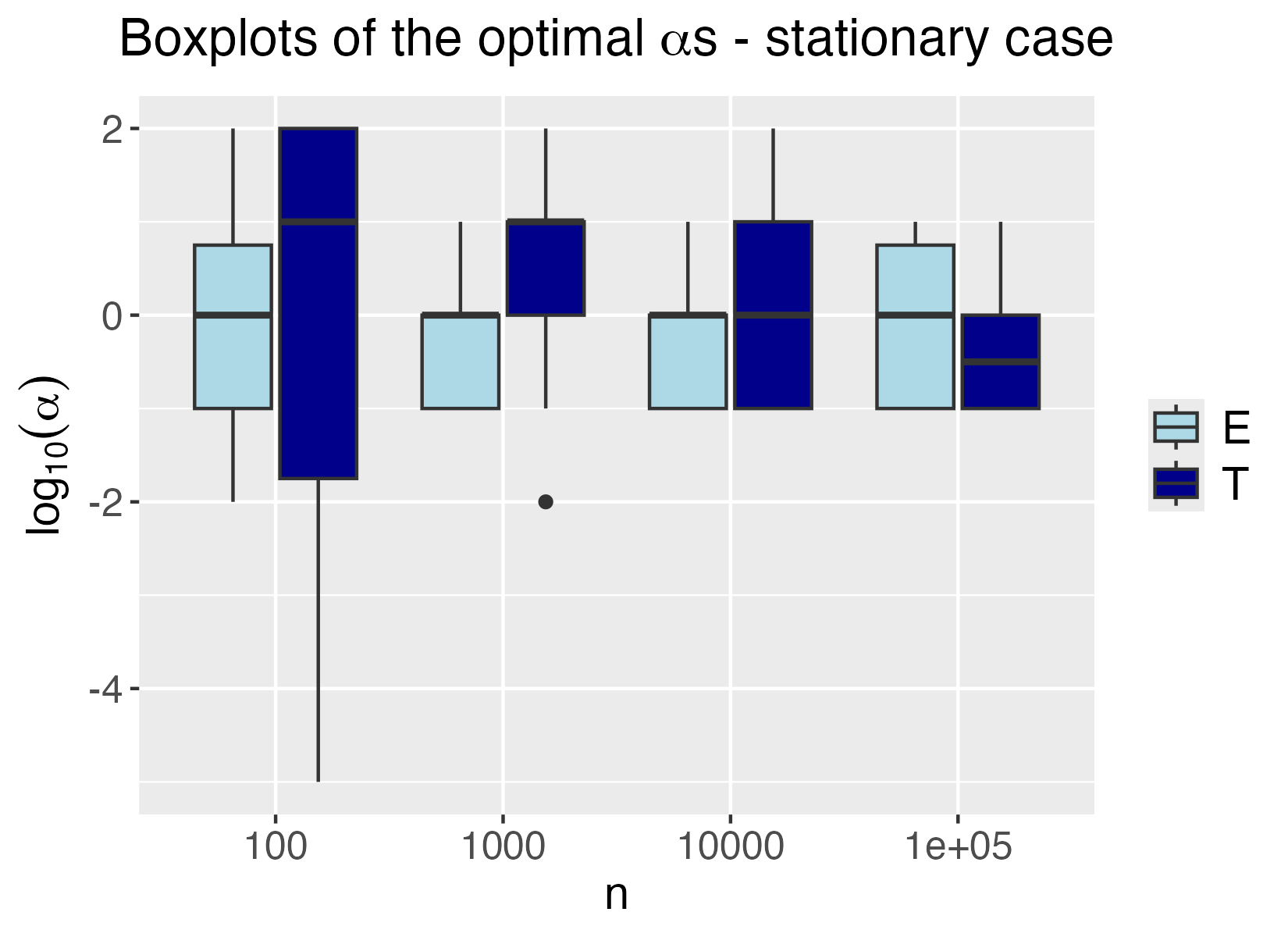}
    }
    \hfill
    \subfloat{%
        \includegraphics[width=0.49\linewidth]{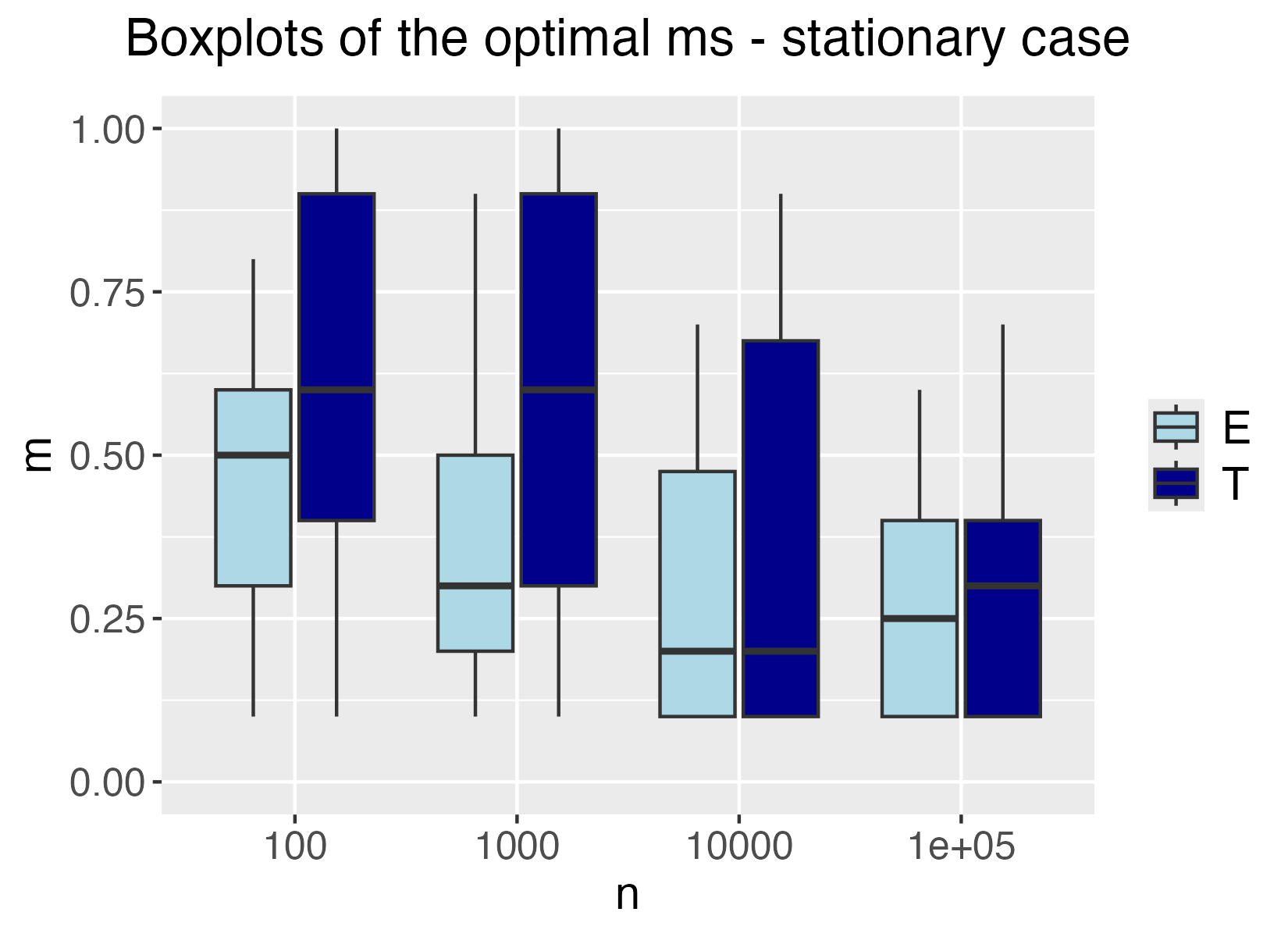}
    }
    \caption{Boxplots of the optimal parameters $\alpha$ (left) and $m$ (right) identified as the minimizers of the total error with respect to the empirical covariance ({\em E}) and the total error with respect to the true covariance ({\em T}). The considered covariance structure is the Matérn stationary covariance introduced in Section~\ref{sec:data-sim}.}
    \label{fig:increasing_nsamples_optimal_stat}
\end{figure}

\begin{figure}[ht]
    \centering
    \subfloat{%
        \includegraphics[width=0.49\linewidth]{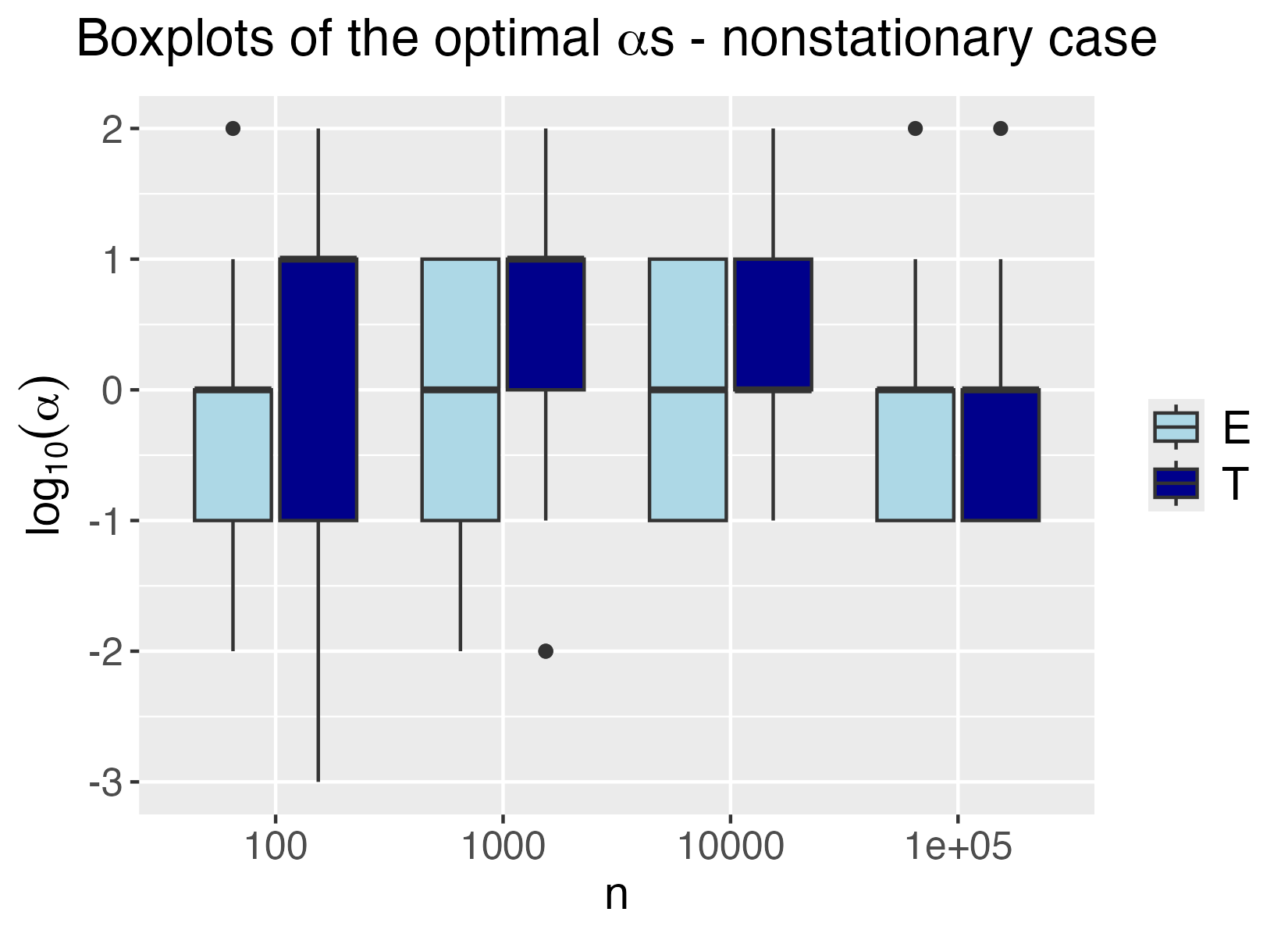}
    }
    \hfill
    \subfloat{%
        \includegraphics[width=0.49\linewidth]{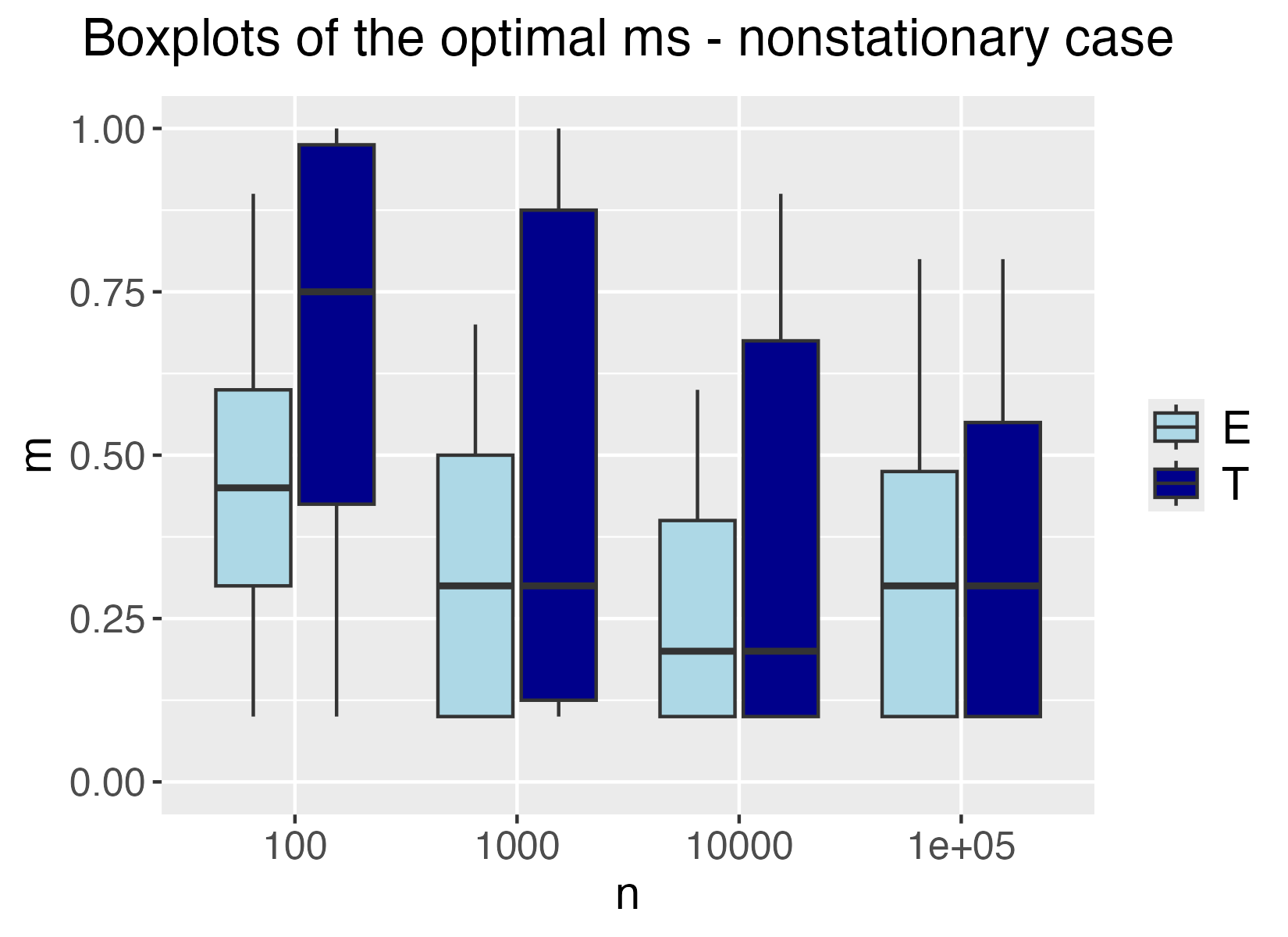}
    }
    \caption{Boxplots of the optimal parameters $\alpha$ (left) and $m$ (right) identified as the minimizers of the total error with respect to the empirical covariance ({\em E}) and the total error with respect to the true covariance ({\em T}). The considered covariance structure is the nonstationary covariance introduced in Section~\ref{sec:data-sim}.}
    \label{fig:increasing_nsamples_optimal_nonstat}
\end{figure}

Taken together, these results show that, as the sample size increases, the empirical covariance provides an increasingly accurate approximation for the true covariance for the purpose of hyperparameter selection. Consequently, minimizing the estimation error with respect to the empirical covariance -- the only error criterion available in practice -- yields asymptotically equivalent hyperparameter choices and estimation performance.
In Section~\ref{sec:hyperparam-sel}, we discuss how hyperparameters are selected in real applications, where the total empirical error cannot be evaluated and only the error on observed entries of the empirical covariance is available.

\subsection{Practical criteria for hyperparameters selection}
\label{appendix:simulation-study-hyp-sel}

\begin{figure}[ht]
    \centering
    \includegraphics[width=0.8\linewidth]{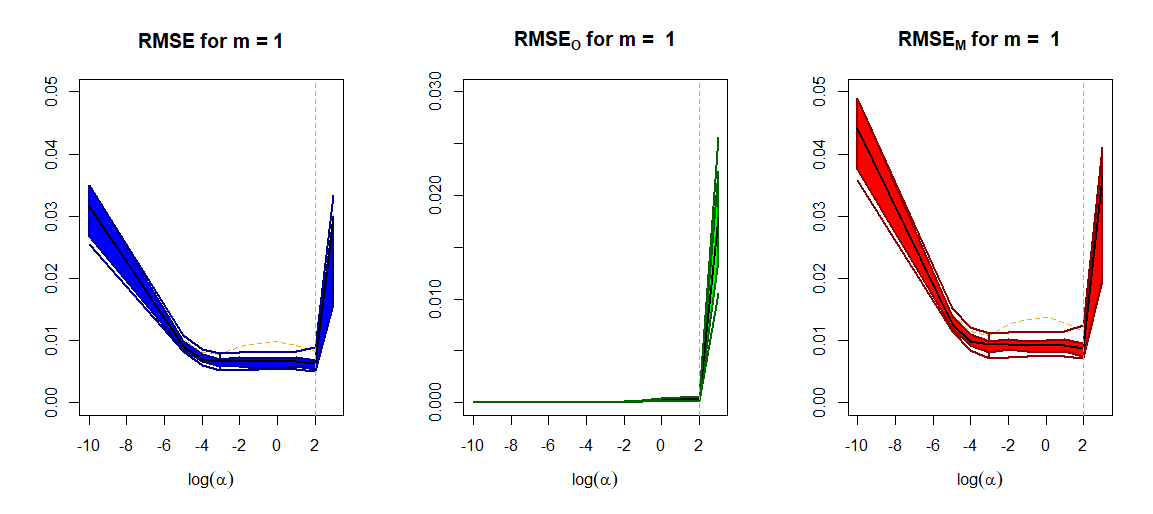}
    \caption{Functional boxplots of the curves of RMSE generated per each value of the logarithm of \(\alpha\) over 30 simulations, for \(m=1\) in the nonstationary scenario. Each functional boxplot is related to a different error: total RMSE (blue), RMSE\(_O\) (green), RMSE\(_M\) (red). The point \(log(\alpha)=-10\) on the x-axis serves as a fictitious representation graphically introduced to illustrate the value of RMSE for \(\alpha = 0\).}
    \label{fig:param-alpha-nonstat}
\end{figure}

\begin{figure}[ht]
    \centering
    \includegraphics[width=0.8\linewidth]{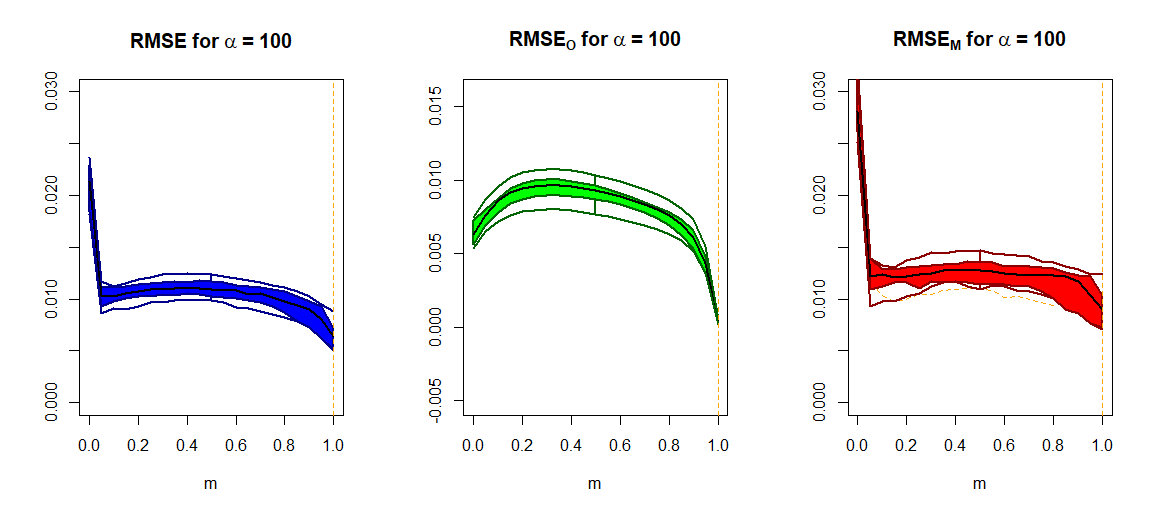}
    \caption{Functional boxplots of the curves of RMSE generated for a grid of values of $m \in [0,1]$ over 30 simulations, for \(\alpha=100\) in the nonstationary scenario. Each functional boxplot is related to each type of error: total RMSE (blue), RMSE\(_O\) (green), RMSE\(_M\) (red).}
    \label{fig:param-m-nonstat}
\end{figure}

\begin{figure}[ht]
    \centering
    \includegraphics[width=0.8\linewidth]{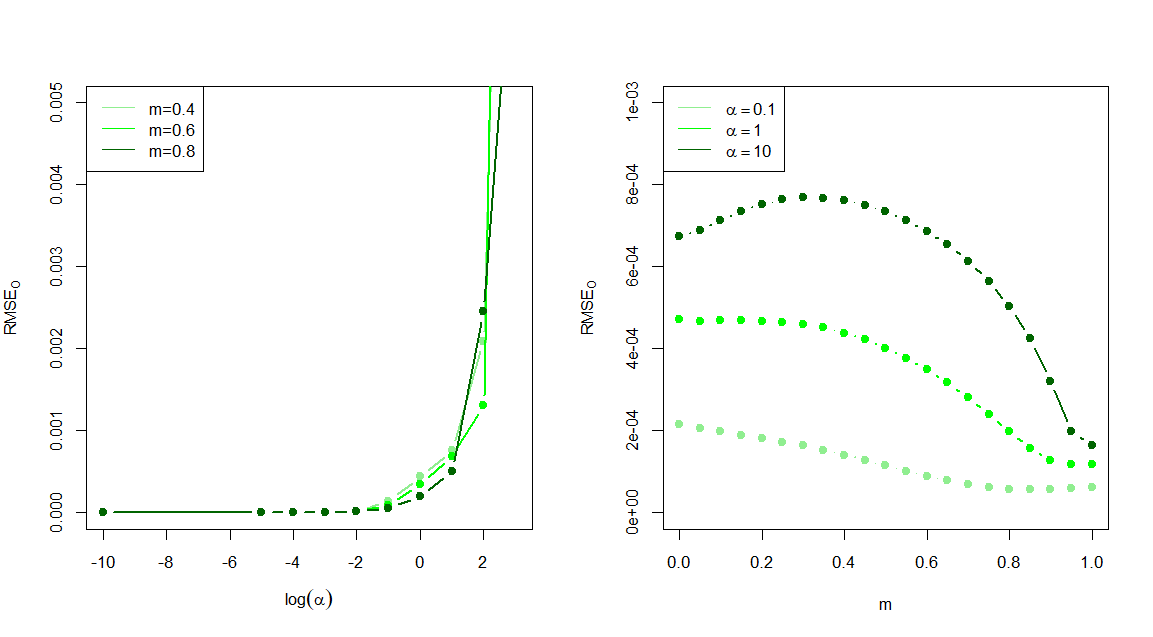}
    \caption{Curves of mean RMSE generated for each value of $\log \alpha$ over 30 simulations, for different values of $m$ (left) and of $\alpha$ (right) in the stationary scenario.}
    \label{fig:stat-paramsvarying}
\end{figure}

\begin{figure}[ht]
    \centering
    \includegraphics[width=0.8\linewidth]{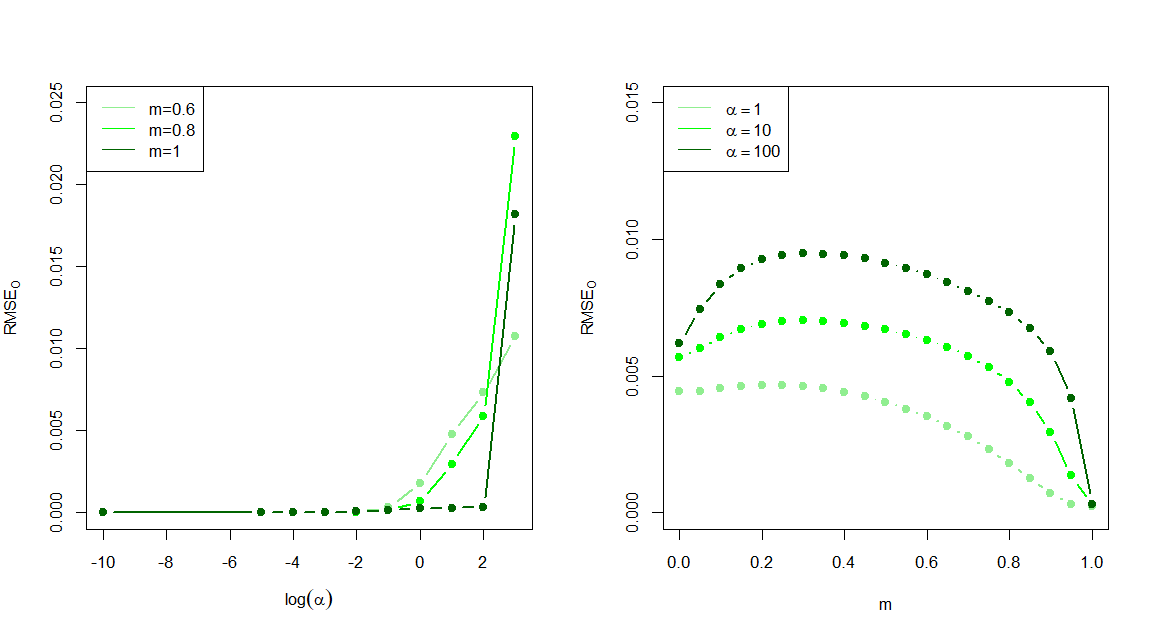}
    \caption{Curves of mean RMSE generated for each value of $\log \alpha$ over 30 simulations, for different values of $m$ (left) and of $\alpha$ (right) in the nonstationary scenario.}
    \label{fig:nonstat-paramsvarying}
\end{figure}


\section{Case study}
\label{appendix:case-study}

\subsection{Preprocessing}
\label{appendix:preprocessing}
The available data cover the period from 27/03/2015 to 22/02/2023. The spacing between consecutive SBAS acquisitions is not constant and gives rise to three distinct time intervals. From 27/03/2015 to 27/09/2016, observations are separated by 12 days; from 27/09/2016 to 30/12/2021, by 6 days; and from 30/12/2021 to 22/02/2023, again by 12 days. Since the subsequent analysis requires equally spaced observations to account for temporal dependence, the three time intervals are treated separately.

We present the temporal analysis on the second time interval, which comprises 316 out of 391 total observations, and a similar procedure was applied to the other time intervals. Autocorrelation is assessed pixel-wise using the Durbin–Watson statistic \citep{durbin1950testing}. Across all pixels, the statistic consistently lies between 2 and 4, providing clear evidence of negative autocorrelation (see Fig. \ref{fig:dw-test}).

\begin{figure}[ht]
    \centering
    \includegraphics[width=1\linewidth]{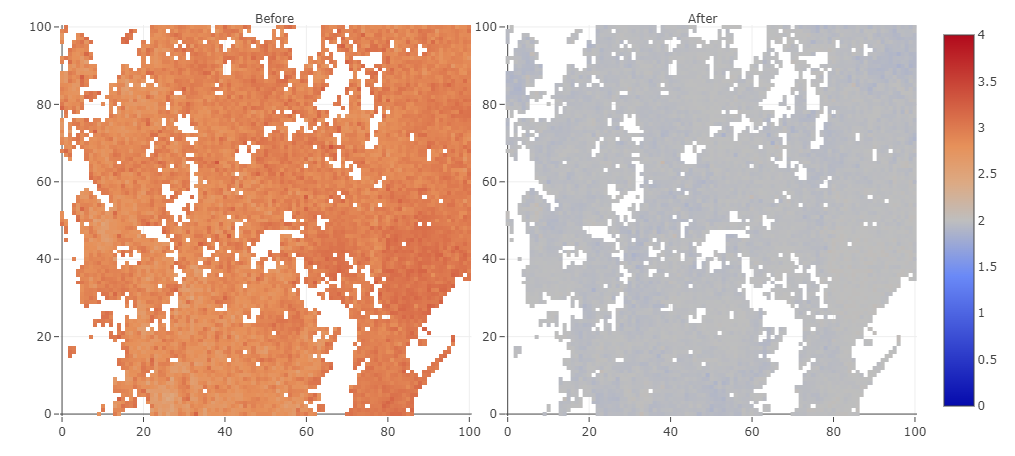}
    \caption{The Durbin-Watson test statistic is reported for each pixel considering the initial temporal series (on the left) and the residuals after fitting the MA(2) model (on the right).}
    \label{fig:dw-test}
\end{figure}

Stationarity is verified using the Augmented Dickey–Fuller test applied to each pixel’s time series. The resulting p-values are uniformly small, leading to rejection of the null hypothesis of non-stationarity.

To determine an appropriate ARMA specification for the series in the second time interval, autocorrelation (ACF) and partial autocorrelation (PACF) functions are examined following the Box–Jenkins method \citep{commandeur2007}. 
For a representative pixel -- whose ACF/PACF behaviour is shared by the other pixels -- the ACF displays a rapid decay and the PACF cuts off after lag 2 (Figure \ref{fig:acf-pacf}). This pattern indicates that an MA(2) specification is appropriate for the entire set of time series.
Model comparison via the Akaike Information Criterion between MA(2), AR(2), MA(3), and AR(3) confirms that MA(2) provides the best fit. Figure \ref{fig:ts-ma2} illustrates the agreement between the observed series and the fitted MA(2) process. To verify that the selected model effectively removes temporal dependence, the Durbin–Watson test is recomputed on the MA(2) residuals. The resulting statistics are uniformly close to 2 across pixels, indicating that autocorrelation has been mitigated.

\begin{figure}[ht]
    \centering
    \includegraphics[width=0.8\linewidth]{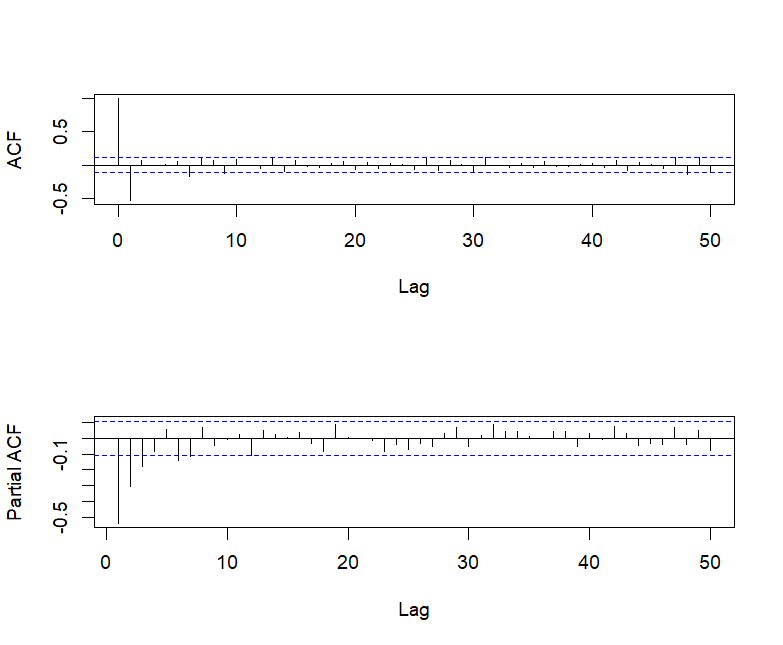}
    \caption{Autocorrelation and partial autocorrelation functions for pixel (50,1) of the red subwindow, considering time instants from the second range of dates.}
    \label{fig:acf-pacf}
\end{figure}

\begin{figure}[ht]
    \centering
    \includegraphics[width=1\linewidth]{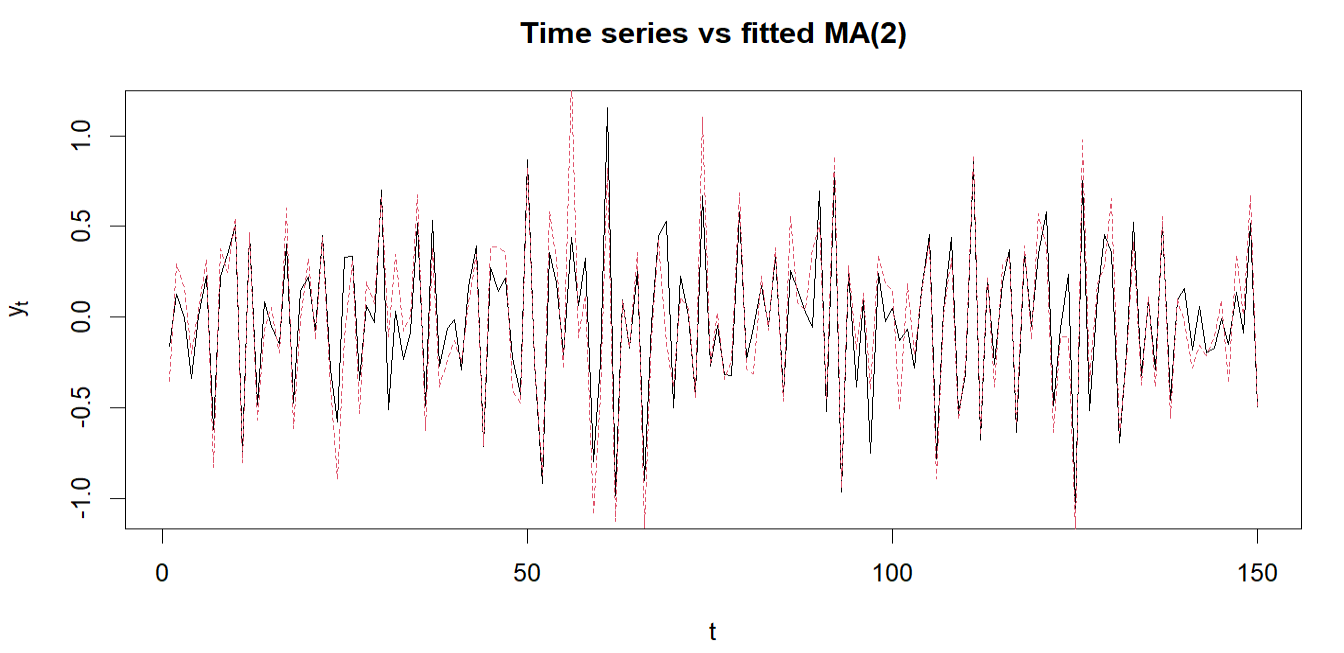}
    \caption{We report the curve of the temporal series of ground displacement for pixel (50,1) in black and its estimation through MA(2) model fitting in a dashed red line, evaluated over a subinterval of time instants of the second range of dates.}
    \label{fig:ts-ma2}
\end{figure}

The same procedure is applied to the remaining two time intervals, associated with 12-day spacing. These shorter series also exhibit negative autocorrelation. Examination of ACF and PACF suggests that AR(1) provides a slightly better fit than MA(1), though the difference is marginal. Given the limited length of these series and to maintain a coherent modeling strategy across time intervals, an MA(1) model is adopted for the 12-day intervals, while the MA(2) model is retained for the 6-day interval.

\subsection{Selection of the hyperparameters}
\label{appendix:case-study-hyp-sel}

\begin{figure}[ht]
    \centering
    \includegraphics[width=0.45\linewidth]{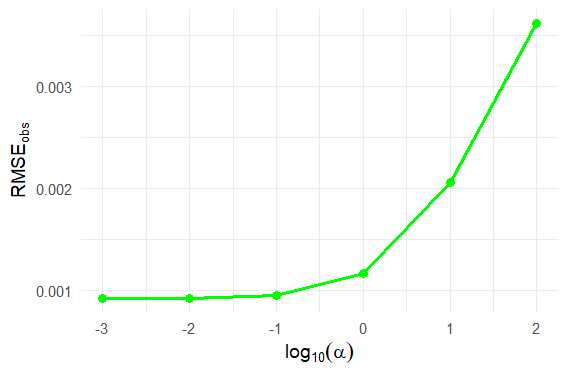}
    \hfill
    \includegraphics[width=0.45\linewidth]{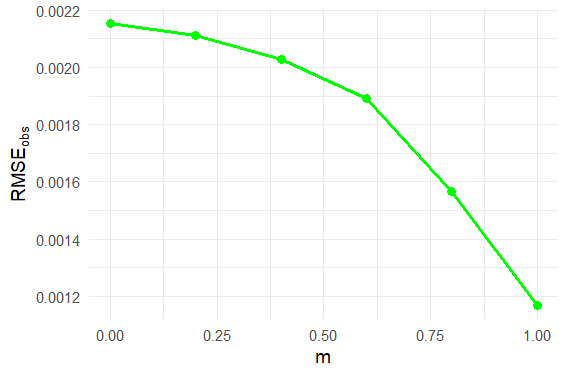}
    \caption{Hyperparameter selection based on RMSE\(_O\).}
    \label{fig:error-alpha-m-cs}
\end{figure}



\end{document}